\documentclass[11pt]{article}
\usepackage{amsthm, amsmath, amssymb, amsfonts, url, booktabs, tikz, setspace, fancyhdr, bm}
\usepackage{hyperref}
\usepackage{geometry}
\geometry{verbose,tmargin=2.1cm,bmargin=2.1cm,lmargin=2.4cm,rmargin=2.4cm}
\usepackage{hyperref, enumerate}
\usepackage[shortlabels]{enumitem}
\usepackage[babel]{microtype}
\usepackage[english]{babel}
\usepackage[capitalise]{cleveref}
\usepackage{comment}
\usepackage{bbm}
\usepackage{csquotes}
\usepackage{mathabx}
\usepackage{tikz}
\usepackage{graphicx}
\usepackage{float}
\usepackage{amsmath}


\counterwithin{figure}{section}

\newtheorem{theorem}{Theorem}[section]

\newtheorem{lemma}[theorem]{Lemma}
\newtheorem{cor}[theorem]{Corollary}

\newtheorem{claim}[theorem]{Claim}

\theoremstyle{definition}
\newtheorem{defn}[theorem]{Definition}
\newtheorem*{defn-non}{Definition}
\newtheorem{exam}[theorem]{Example}

\newtheorem{ques}[theorem]{Question}

\newlist{Case}{enumerate}{3}
\setlist[Case, 1]{%
    label           =   {\bfseries Case \arabic*.},
    labelindent=1em ,labelwidth=1cm, labelsep*=1em, leftmargin =!
}
\setlist[Case, 2]{%
    label           =   {\bfseries Subcase \arabic{Casei}.\arabic*.},
    labelindent=-1em ,labelwidth=1cm, labelsep*=1em, leftmargin =!
}
\setlist[Case, 3]{%
    label           =   {\bfseries Subsubcase \arabic{Casei}.\arabic{Caseii}.\arabic*.},
    labelindent=-1em ,labelwidth=1cm, labelsep*=1em, leftmargin =!
}

\newenvironment{poc}{\begin{proof}[Proof of claim]}{\end{proof}}

\newcommand{\ceil}[1]{\lceil #1\rceil}
\newcommand{\floor}[1]{\lfloor #1\rfloor}

\newcommand{\aaa}{\boldsymbol{a}}
\newcommand{\bb}{\boldsymbol{b}}
\newcommand{\cc}{\boldsymbol{c}}

\newcommand{\vv}{\boldsymbol{v}}

\newcommand{\uu}{\boldsymbol{u}}
\newcommand{\pp}{\boldsymbol{p}}
\newcommand{\qq}{\boldsymbol{q}}
\newcommand{\xx}{\boldsymbol{x}}
\newcommand{\yy}{\boldsymbol{y}}
\newcommand{\DD}{\boldsymbol{D}}

\usepackage{todonotes}


\newcommand{\im}{\mathrm{Im}}
\newcommand{\enc}{\mathrm{Enc}}
\newcommand{\wt}{\mathrm{wt}}
\newcommand{\vol}{\mathrm{Vol}}
\newcommand{\Z}{\mathbb{Z}}

\title{Optimal redundancy of function-correcting codes }

\author{
Gennian Ge\thanks{School of Mathematical Sciences, Capital Normal University, Beijing, China. Email: gnge@zju.edu.cn. Gennian Ge is supported by the National Key Research and Development Program of China under Grant 2020YFA0712100, the National Natural Science Foundation of China under Grant 12231014, and Beijing Scholars Program.}
\and 
Zixiang Xu\thanks{Extremal Combinatorics and Probability Group (ECOPRO), Institute for Basic Science (IBS), Daejeon, South Korea. Email: zixiangxu@ibs.re.kr. Supported by IBS-R029-C4.}
\and 
Xiande Zhang\thanks{School of Mathematical Sciences, University of Science and Technology of China,
Hefei, 230026, Anhui, China, and Hefei National Laboratory, University of Science and Technology of China, Hefei, 230088, Anhui, China. Emails:drzhangx@ustc.edu.cn, zyjshuxue@mail.ustc.edu.cn. Xiande Zhang and Yijun Zhang are supported by the National Key Research and Development Programs of China 2023YFA1010201 and 2020YFA0713100, the
NSFC under Grants No. 12171452 and No. 12231014, and the Innovation Program for Quantum Science and Technology (2021ZD0302902).}
\and
Yijun Zhang\footnotemark[3]
}

\begin{document}
\maketitle

\begin{abstract}
Function-correcting codes (FCCs), introduced by Lenz, Bitar, Wachter-Zeh, and Yaakobi, protect specific function values of a message rather than the entire message. A central challenge is determining the optimal redundancy—the minimum additional information required to recover function values amid errors. This redundancy depends on both the number of correctable errors \(t\) and the structure of message vectors yielding identical function values. While prior works established bounds, key questions remain, such as the optimal redundancy for functions like Hamming weight and Hamming weight distribution, along with efficient code constructions. In this paper, we make the following contributions: 
\begin{enumerate}  
    \item For the Hamming weight function, we improve the lower bound on optimal redundancy from \(\frac{10(t-1)}{3}\) to \(4t - \frac{4}{3}\sqrt{6t+2} + 2\). On the other hand, we provide a systematical approach to constructing explicit FCCs via a novel connection with Gray codes, which also improve the previous upper bound from \(\frac{4t-2}{1 - 2\sqrt{\ln(2t)/(2t)}}\) to \(4t - \log t\). Consequently, we almost determine the optimal redundancy for Hamming weight function. 

    \item The Hamming weight distribution function is defined by the value of Hamming weight divided by a given integer \(T\in \mathbb{N}\). Previous work established that the optimal redundancy is \(2t\) when \(T > 2t\), while the case \(T \leq 2t\) remained unclear. We show that the optimal redundancy remains \(2t\) when \(T \geq t+1\). However, in the surprising regime where \(T = o(t)\), we achieve near-optimal redundancy of \(4t - o(t)\). Our results reveal a significant distinction in behavior of redundancy for distinct choices of $T$.  
\end{enumerate}
\end{abstract}

\section{Introduction}
\subsection{Background}
In conventional communication systems, a sender transmits a message to a receiver over a noisy channel, typically assuming that every part of the message is equally important. The primary objective is to design an \emph{error-correcting code} (ECC) and an appropriate decoder to recover the entire message with high fidelity. However, in many practical scenarios~\cite{Ahlswede1981information,boyarinov1981linear,masnick1967linear,Orlitsky2001coding}, only specific attributes or functions of the message are of primary interest. While recovering the full message enables the evaluation of these functions, doing so can be inefficient when the message is large and the output of function is small.  

To address this inefficiency, Lenz, Bitar, Wachter-Zeh, and Yaakobi~\cite{lenz2023function} introduced the concept of \emph{function-correcting codes} (FCCs), which focus on protecting specific functions of messages rather than the entire message itself. When only a particular attribute needs to be preserved, redundancy can be significantly reduced compared to traditional ECCs. Moreover, in applications where maintaining the original form of data is desirable, such as distributed computing~\cite{Kuzuoka2017distributed,Orlitsky2001coding,Wei2023robust} and distributed storage~\cite{Shutty2022bandwidth}, the authors in~\cite{lenz2023function} proposed a systematic encoding approach for FCCs. In this framework, redundancy is appended to the original message, ensuring efficient function recovery while keeping the message structure intact, as illustrated in~\cref{fig:intro_fcc_problem}.

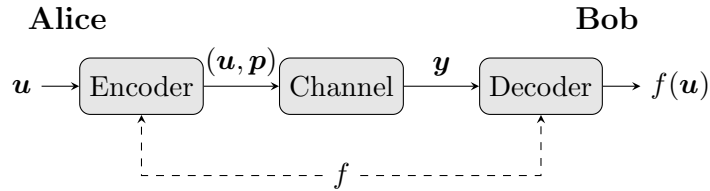
\begin{figure}[t]
	\centering
	\begin{tikzpicture}[>=stealth]
		\node (u) {$\uu$};
		\node[right = 0.5cm of u, rectangle, rounded corners, fill=gray!20!, draw, minimum height=0.75cm] (enc) {Encoder};
		\node[above left = 0.25cm and -0.5cm of enc] (alice) {\bfseries\large{Alice}};

		\node[draw,rectangle, rounded corners, fill=gray!20!, right = 1 cm of enc, minimum height=0.75cm] (channel) {Channel};
		\node[draw,rectangle, rounded corners, fill=gray!20!, right = 1 cm of channel, minimum height=0.75cm] (decoder) {Decoder};
		
		\node[below = 0.5cm of channel] (fu) {$f$};
		
		\node[right = 0.5cm of decoder](fy) {$f(\uu)$};
		\node[above right= 0.25cm and -0.5cm of decoder] (bob) {\bfseries\large{Bob}};
		
		\draw[->] (u) -- (enc);%
		\draw[->] (enc) -- node [above] {$(\uu,\pp)$} (channel);
		\draw[->] (channel) -- node [above] {$\yy$} (decoder);
		\draw[->] (decoder) -- (fy);
		\draw[->, dashed] (fu) -| (enc);
		\draw[->, dashed] (fu) -| (decoder);
	\end{tikzpicture}

	\caption{Overview of the systematic FCC framework: In this setup, Alice wishes to transmit a message \(\uu\) (\(\uu\) can be viewed as a binary vector) to Bob, where a specific attribute \( f(\uu) \) is of primary interest. To ensure Bob can accurately recover this attribute despite potential transmission errors, Alice encodes \(\uu\) into a structured codeword \(\cc = (\uu, \pp)\), where \(\pp\) provides redundancy. Upon receiving a possibly corrupted version \(\yy\) of \(\cc\), Bob leverages his knowledge of \( f \) to reliably infer \( f(\uu) \), even without full error correction.}
	\label{fig:intro_fcc_problem}
\end{figure}

Recognizing that the primary advantage of FCCs lies in their ability to reduce redundancy, the authors of \cite{lenz2023function} aimed to determine the minimal redundancy necessary for accurately recovering key attributes. By focusing on essential information, FCCs provide an efficient alternative to traditional error-correcting methods in communication and storage systems, and their objective was to identify the optimal redundancy for FCCs designed for specific functions. By establishing connections between FCCs and irregular-distance codes, the authors of \cite{lenz2023function} derived both lower and upper bounds on the optimal redundancy of FCCs for general functions, which will be introduced in~\cref{sec:Equivalent} and~\cref{sec:KnownResults}. These results were further applied to specific cases, including locally binary functions, Hamming weight functions, Hamming weight distribution functions, Min-Max functions, and certain real-valued functions. Building on this foundation, Premlal and Rajan in \cite{premlal2024function} extended the study of FCCs to linear functions, an important variant in this research direction. 

Exploring FCCs over different channels is also an intriguing area of study. For example, Xia, Liu and Chen~\cite{xia2024function} extended the concept of FCCs to symbol-pair read channels and obtained various new theoretical results. Such investigations deepen the understanding of FCCs and open up new possibilities for their applications. 


Formally, let \(\uu \in \Z_2^k\) represent a binary message, and consider a function \( f: \Z_2^k \to \im(f) \), where \(\im(f) := \{ f(\uu) : \uu \in \Z_2^k \}\) denotes the image of \(f\). The message is encoded using an encoding function \(\enc: \Z_2^k \to \Z_2^{k+r}\), defined as \(\enc(\uu) = (\uu, \pp(\uu))\). To emphasize the distinction, \(\uu \in \Z_2^k\) is referred to as the \emph{message vector}, \(\pp(\uu) \in \Z_2^r\) is called the \emph{redundancy vector}, and \(r\) is defined as the \emph{redundancy}. For positive integer \(n\) and two vectors $\uu,\vv\in\Z_2^n$, the Hamming distance between \(\uu\) and \(\vv\), denoted by \(d(\uu, \vv)\), is defined as the number of coordinates in which they differ.

The formal definition of function-correcting codes is as follows.

\begin{defn}
   Let \(k\), \(r\), and \(t\) be positive integers. An encoding function \(\enc: \Z_2^k \to \Z_2^{k+r}\), defined as \(\enc(\uu) = (\uu, \pp(\uu))\) for \(\uu \in \Z_2^k\), is said to define a \emph{function-correcting code} (FCC for short) for a function \(f: \Z_2^k \to \im(f)\) if, for all \(\uu_1, \uu_2 \in \Z_2^k\) satisfying \(f(\uu_1) \neq f(\uu_2)\), the following condition holds:
   \[
   d(\enc(\uu_1), \enc(\uu_2)) \geq 2t + 1.
   \]
   Here, \(d(\cdot, \cdot)\) denotes the Hamming distance. The \emph{optimal redundancy} \(r_{f}(k, t)\) is defined as the smallest integer \(r\) for which there exists a function-correcting code with an encoding function \(\enc: \Z_2^k \to \Z_2^{k+r}\) for the function \(f\).
\end{defn}

By this definition, for any vector \(\yy\) obtained by introducing at most \(t\) errors into \(\enc(\uu)\), the receiver can uniquely recover \(f(\uu)\), provided it has knowledge of the function \(f(\cdot)\) and the encoding function \(\enc(\cdot)\). The central problem in this area is to study the behavior of the optimal redundancy, denoted by \(r_f(k, t)\).

\begin{ques}
    Let \(t\) and \(k\) be positive integers and \(f: \Z_2^k \to \textup{Im}(f)\) be a function. Determine \(r_{f}(k, t)\).
\end{ques}

\textbf{Notations.}
For simplicity, we adopt the following notation:
For any integer \( M \geq 1 \), define \([M] = \{1, 2, \dots, M\}.\) For integers \( a \leq b \), define \([a, b] = \{a, a+1, \dots, b-1, b\}\). For an integer \( M \geq 1 \) and an integer \( s \), define \( s \bmod M \) as the unique representative element in the set \([0, M-1]\). For any integer \( 0 \leq i \leq k \), define the vector \((0^{k-i} 1^i) \in \Z_2^k\) as the vector whose first \(k-i\) coordinates are 0 and the remaining \(i\) coordinates are 1. Throughout this paper, \(\log{x}\) denotes \(\log_{2}{x}\).


\subsection{Equivalent reformulation}\label{sec:Equivalent}

One of the key connections established in~\cite{lenz2023function} is the relationship between function-correcting codes and irregular-distance codes. For a matrix \(\DD\), we denote by \([\DD]_{ij}\) the \((i,j)\)-th entry of \(\DD\). To formalize this connection, the authors in~\cite{lenz2023function} introduced the distance requirement matrix of a function \(f\) as follows.

\begin{defn}
    For an integer \(M \geq 1\), let \(\uu_1, \uu_2, \dots, \uu_M \in \Z_2^k\). The distance requirement matrix \(\DD_f(t, \uu_1, \dots, \uu_M)\) of a function \(f\) is an \(M \times M\) matrix with entries defined as:
    \[
    [\DD_f(t, \uu_1, \dots, \uu_M)]_{ij} =
    \begin{cases}
        \max\{2t + 1 - d(\uu_i, \uu_j),0\}, & \text{if } f(\uu_i) \neq f(\uu_j), \\
        0, & \text{otherwise},
    \end{cases}
    \]
    where \(d(\cdot, \cdot)\) denotes the Hamming distance.
\end{defn}

Let \(\mathcal{P} = \{\pp_1, \pp_2, \dots, \pp_M\} \subseteq \Z_2^r\) be a code of length \(r\) and cardinality \(M\). Here, \(r\) is chosen as length of the code, as it can later be related to the redundancy of FCCs. Irregular-distance codes are formally defined as follows.

\begin{defn}
    Let \(\DD \in \mathbb{N}_0^{M \times M}\). A collection of vectors \(\mathcal{P} = \{\pp_1, \pp_2, \dots, \pp_M\}\) is called a \(\DD\)-code if there exists an ordering of its codewords such that \(d(\pp_i, \pp_j) \geq [\DD]_{ij}\) for all \(i, j \in [M]\). 

    Furthermore, we define \(N(\DD)\) as the smallest integer \(r\) for which there exists a \(\DD\)-code of length \(r\). If \([\DD]_{ij} = D\) for all \(i \neq j\), we denote the corresponding \(N(\DD)\) as \(N(M, D)\).
\end{defn}

The following result in~\cite{lenz2023function} shows that above ways to define redundancy are equivalent.
\begin{theorem}[\cite{lenz2023function}]\label{thm:equivlent}
    For any function $f: \mathbb{Z}_2^k \to \im(f)$,
	 \[
  r_f(k, t) =  N(\DD_f(t, \uu_1, \ldots, \uu_{2^k})).
  \]
    
\end{theorem}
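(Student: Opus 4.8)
The plan is to establish the two inequalities $r_f(k,t) \leq N(\DD_f(t, \uu_1, \ldots, \uu_{2^k}))$ and $r_f(k,t) \geq N(\DD_f(t, \uu_1, \ldots, \uu_{2^k}))$ separately, each by directly translating a code on one side into a code of the same length on the other. Throughout, write $\DD_f := \DD_f(t, \uu_1, \ldots, \uu_{2^k})$ and keep fixed the enumeration $\uu_1, \ldots, \uu_{2^k}$ of $\Z_2^k$ used to define $\DD_f$.

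The single structural fact driving everything is that the encoding $\enc(\uu) = (\uu, \pp(\uu))$ is systematic, so the message coordinates and the redundancy coordinates occupy disjoint positions. Hence for all $i,j$,
\[
d(\enc(\uu_i), \enc(\uu_j)) = d(\uu_i, \uu_j) + d(\pp(\uu_i), \pp(\uu_j)).
\]
First I would record the consequence: the FCC requirement $d(\enc(\uu_i), \enc(\uu_j)) \geq 2t+1$ for pairs with $f(\uu_i) \neq f(\uu_j)$ is equivalent to $d(\pp(\uu_i), \pp(\uu_j)) \geq 2t+1 - d(\uu_i, \uu_j)$, and since a Hamming distance is nonnegative, this is in turn equivalent to $d(\pp(\uu_i), \pp(\uu_j)) \geq \max\{2t+1 - d(\uu_i, \uu_j), 0\} = [\DD_f]_{ij}$. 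For pairs with $f(\uu_i) = f(\uu_j)$ there is no requirement and $[\DD_f]_{ij} = 0$, so the inequality $d(\pp(\uu_i), \pp(\uu_j)) \geq [\DD_f]_{ij}$ holds vacuously. Thus $\enc$ defines an FCC if and only if the indexed family $(\pp(\uu_1), \ldots, \pp(\uu_{2^k}))$ satisfies $d(\pp(\uu_i), \pp(\uu_j)) \geq [\DD_f]_{ij}$ for all $i,j$.

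For $N(\DD_f) \leq r_f(k,t)$, I would take an optimal FCC of redundancy $r = r_f(k,t)$ and observe, via the equivalence above, that its redundancy vectors $(\pp(\uu_1), \ldots, \pp(\uu_{2^k}))$ form a $\DD_f$-code of length $r$, the certifying ordering being the one inherited from the enumeration of the $\uu_i$. Conversely, for $r_f(k,t) \leq N(\DD_f)$, I would take an optimal $\DD_f$-code, use the ordering that certifies it to relabel its vectors as $\pp_1, \ldots, \pp_{2^k}$ aligned with $\uu_1, \ldots, \uu_{2^k}$, and set $\pp(\uu_i) := \pp_i$; the same equivalence shows this is an FCC of redundancy $N(\DD_f)$. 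Combining the two inequalities gives the claimed equality.

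The argument is essentially a dictionary translation, so there is no deep obstacle; the points demanding care are (i) matching the ordering freedom in the definition of a $\DD$-code correctly to the fixed enumeration of message vectors underlying $\DD_f$, and (ii) treating $\mathcal{P}$ as an indexed family rather than a genuine set, since when $f(\uu_i) = f(\uu_j)$ the optimal construction may legitimately reuse a redundancy vector, i.e.\ $\pp(\uu_i) = \pp(\uu_j)$. Making the systematic distance decomposition explicit at the outset is what collapses both directions into the single if-and-only-if statement above.
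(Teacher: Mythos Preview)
The paper does not actually prove this theorem; it is quoted from \cite{lenz2023function} and stated without proof, so there is no ``paper's own proof'' to compare against. Your argument is correct and is exactly the standard one: the systematic form of the encoding gives the additive decomposition $d(\enc(\uu_i),\enc(\uu_j)) = d(\uu_i,\uu_j) + d(\pp(\uu_i),\pp(\uu_j))$, from which the FCC condition on pairs with $f(\uu_i)\neq f(\uu_j)$ is equivalent to the $\DD_f$-code condition on the redundancy vectors, and both inequalities follow by passing an optimal object in one setting to the other. Your remarks on (i) aligning the ordering in the $\DD$-code definition with the fixed enumeration of $\Z_2^k$, and (ii) reading $\mathcal{P}$ as an indexed family (allowing repeated redundancy vectors when $f(\uu_i)=f(\uu_j)$), are precisely the two technicalities one must be careful about, and you handle them correctly.
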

An alternative and highly effective approach to studying this problem is that, instead of directly investigating the properties of original $\DD$-codes, it can sometimes be more advantageous to focus on certain local regions of the codes. 

\begin{defn}
    For any $\mathcal{M}:=\{\uu_1,\uu_2,\dots , \uu_m\}\subseteq \Z_2^k$, we define $r_f(k,t,\mathcal{M})$ as the smallest $r$ such that there exists a function-correcting code with encoding function $\enc:\mathcal{M} \to \mathcal{M}\times \Z_2^{r}$ for the function $f$, i.e., for any $\uu_i,\uu_j\in \mathcal{M}$ with $f(\uu_i)\neq f(\uu_j)$, we have $d(\enc(\uu_i),\enc(\uu_j))\ge 2t+1$.
\end{defn}

It is obvious that $N(\DD_f(t,\uu_1,\dots,\uu_{2^k}))\geq N(\DD_f(t,\uu_1,\dots,\uu_{M}))$ for any $M\leq 2^k$, thus the following corollary holds.
\begin{cor}[\cite{lenz2023function}]\label{cor1}
	Let $M,k,t$ be some positive integers, and let $\uu_1,\dots,\uu_{M} \in \Z_2^k$ be arbitrary different vectors. Then, the redundancy of a function-correcting code is at least
	$$r_f(k,t) \geq N(\DD_f(t,\uu_1,\dots,\uu_{M})).$$
In particular, for any function $f$ with $|\im(f)|\geq2$,
	$$r_f(k,t)\geq 2t.$$
\end{cor}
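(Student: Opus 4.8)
The plan is to derive both parts from \cref{thm:equivlent} together with a single monotonicity property of the quantity $N(\cdot)$: passing from the full list of $2^k$ message vectors to any sub-collection can only decrease the optimal $\DD$-code length. First I would record this monotonicity. Fix an enumeration $\uu_1,\dots,\uu_{2^k}$ of $\Z_2^k$ and let $\C{P}=\{\pp_1,\dots,\pp_{2^k}\}\subseteq\Z_2^r$ be an optimal $\DD_f(t,\uu_1,\dots,\uu_{2^k})$-code, so that $r=N(\DD_f(t,\uu_1,\dots,\uu_{2^k}))$. For any $M$ distinct vectors, which after a permutation (under which $N$ is clearly invariant, since its definition already optimizes over orderings) we may take to be $\uu_1,\dots,\uu_M$, the key observation is that the entries of $\DD_f(t,\uu_1,\dots,\uu_M)$ are exactly the top-left $M\times M$ principal submatrix entries of $\DD_f(t,\uu_1,\dots,\uu_{2^k})$, since both are determined by the same rule $\max\{2t+1-d(\uu_i,\uu_j),0\}$ (or $0$) applied to the same pairs. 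Hence the restricted collection $\{\pp_1,\dots,\pp_M\}$ already satisfies $d(\pp_i,\pp_j)\ge[\DD_f(t,\uu_1,\dots,\uu_M)]_{ij}$ for all $i,j\in[M]$, i.e.\ it is a $\DD_f(t,\uu_1,\dots,\uu_M)$-code of length $r$. Therefore $N(\DD_f(t,\uu_1,\dots,\uu_M))\le r=N(\DD_f(t,\uu_1,\dots,\uu_{2^k}))=r_f(k,t)$, which is the first inequality.

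For the \enquote{in particular} statement I would apply this first inequality with a carefully chosen pair, taking $M=2$. The point is that to force the bound $2t$ I need two message vectors at Hamming distance exactly $1$ whose function values differ, since then the single off-diagonal requirement becomes $2t+1-1=2t$, whereas any larger distance would yield a strictly weaker requirement. To produce such a pair I would invoke connectivity of the Boolean hypercube: since $\abs{\im(f)}\ge 2$, there exist $\aaa,\bb\in\Z_2^k$ with $f(\aaa)\neq f(\bb)$; walking from $\aaa$ to $\bb$ along a shortest path that flips one coordinate at a time, the value of $f$ must change across at least one edge, which yields adjacent vectors $\uu_1,\uu_2$ with $d(\uu_1,\uu_2)=1$ and $f(\uu_1)\neq f(\uu_2)$.

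It then remains to evaluate $N(\DD_f(t,\uu_1,\uu_2))$ for this pair. The matrix is $2\times 2$ with off-diagonal entry $2t$, so a $\DD_f(t,\uu_1,\uu_2)$-code is simply a pair of binary vectors of some length $r$ at Hamming distance at least $2t$; such a pair exists if and only if $r\ge 2t$ (e.g.\ $\zero$ and $1^{2t}$ attain it), whence $N(\DD_f(t,\uu_1,\uu_2))=2t$. Combining with the first inequality applied to $\uu_1,\uu_2$ gives $r_f(k,t)\ge N(\DD_f(t,\uu_1,\uu_2))=2t$, as desired.

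I do not expect a serious obstacle here: the content is entirely in the two elementary observations that $N$ is monotone under restriction to a sub-collection, and that a distance-$1$ pair with differing $f$-values exists. The only point demanding a little care is the latter, namely ensuring the distance is \emph{exactly} $1$ rather than merely bounded, since the sharp constant $2t$ hinges on it; everything else is bookkeeping with the definition of $\DD_f$.
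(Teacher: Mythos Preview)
Your proposal is correct and follows the same route as the paper: the first inequality is derived from \cref{thm:equivlent} via the monotonicity $N(\DD_f(t,\uu_1,\dots,\uu_{2^k}))\ge N(\DD_f(t,\uu_1,\dots,\uu_M))$, which the paper simply declares obvious while you spell out the restriction argument. Your treatment of the \enquote{in particular} clause (finding an adjacent pair with distinct $f$-values via hypercube connectivity and computing $N$ for the resulting $2\times 2$ matrix) is correct and supplies a detail the paper leaves entirely implicit.
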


\subsection{Known results}\label{sec:KnownResults}

The well-known Plotkin bound~\cite{Plotkin1960binary} and Gilbert-Varshamov bound~\cite{gilbert1952comparison,varshamov1957estimate} serve as fundamental tools in the study of coding theory. In~\cite{lenz2023function}, it is demonstrated that these two bounds can be extended to irregular-distance codes.

\begin{lemma}[\cite{lenz2023function}]\label{Plotkin Bound}
    Let \(M\) be a positive integer. For any \(M \times M\) distance requirement matrix \(\DD \in \mathbb{N}^{M \times M}\), we have
    \[
    N(\DD) \geq 
    \begin{cases} 
    \frac{4}{M^2} \sum\limits_{\substack{i,j \in [M], i < j}} [\DD]_{ij}, & \text{if } M \text{ is even}, \\
    \frac{4}{M^2 - 1} \sum\limits_{\substack{i,j \in [M], i < j}} [\DD]_{ij}, & \text{if } M \text{ is odd}.
    \end{cases}
    \]
\end{lemma}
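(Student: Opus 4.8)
The plan is to run the classical Plotkin double-counting argument, now applied to the pairwise distances dictated by $\DD$ rather than to a single uniform distance. First I would fix an optimal $\DD$-code $\cP = \{\pp_1, \dots, \pp_M\} \subseteq \Z_2^r$ with $r = N(\DD)$, relabelled according to the ordering guaranteed by the definition of a $\DD$-code, so that $d(\pp_i, \pp_j) \geq [\DD]_{ij}$ for all $i, j \in [M]$. The whole argument then rests on estimating the single quantity $S := \sum_{i<j} d(\pp_i, \pp_j)$ in two different ways, from below using the distance requirements and from above using a column count.

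For the lower bound on $S$, I would simply invoke the defining inequality of a $\DD$-code termwise in the pair sum, which gives $S \geq \sum_{i<j} [\DD]_{ij}$, exactly the numerator appearing in the claimed bound. For the upper bound, the key step is to reorganize $S$ by coordinates instead of by pairs: for each position $\ell \in [r]$, let $x_\ell$ denote the number of codewords of $\cP$ having a $1$ in coordinate $\ell$. A pair $\{\pp_i,\pp_j\}$ contributes $1$ to $S$ at coordinate $\ell$ precisely when the two codewords disagree there, and the number of such disagreeing pairs is $x_\ell(M - x_\ell)$. Summing over all coordinates yields the identity $S = \sum_{\ell=1}^{r} x_\ell(M - x_\ell)$.

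It then remains to bound each term $x_\ell(M-x_\ell)$. Viewing $x(M-x)$ as a concave function of $x$ on $\{0, 1, \dots, M\}$, its maximum is $\floor{M^2/4}$, attained at $x = M/2$ when $M$ is even and at $x = (M\pm 1)/2$ when $M$ is odd; this is the one place where the parity of $M$ enters and produces the two cases of the statement. Hence $S \leq r\,\floor{M^2/4}$, and combining with the lower bound gives $\sum_{i<j} [\DD]_{ij} \leq r\,\floor{M^2/4}$. Substituting $\floor{M^2/4} = M^2/4$ for even $M$ and $\floor{M^2/4} = (M^2-1)/4$ for odd $M$, and recalling $r = N(\DD)$, rearranges directly into the two stated inequalities.

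I do not expect a serious obstacle here: the argument is a faithful transcription of the classical Plotkin bound, and the only points demanding minor care are the parity split when maximizing $x_\ell(M-x_\ell)$, and the bookkeeping that the ordering in the definition of a $\DD$-code lets us read off $d(\pp_i,\pp_j) \ge [\DD]_{ij}$ for every ordered pair $i<j$. The generality of $\DD$ causes no difficulty, since only the upper-triangular sum $\sum_{i<j}[\DD]_{ij}$ is ever used.
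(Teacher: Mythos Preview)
Your argument is correct: it is the standard Plotkin double-counting, and every step goes through as you describe. Note, however, that the paper does not actually prove this lemma --- it is quoted from \cite{lenz2023function} without proof --- so there is nothing in the paper to compare your write-up against. That said, the paper does deploy exactly your column-count identity $\sum_{i\neq j} d(\pp_i,\pp_j) = 2\sum_{h} a_h(M-a_h) \le rM^2/2$ inside the proof of \cref{thm:NewDistribution}, so your approach is fully in line with how the authors themselves use this tool.
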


To this end, we define \(\vol(r, d) = \sum_{i=0}^{d} \binom{r}{i}\) as the size of the binary radius-\(d\) Hamming sphere over vectors of length \(r\).

\begin{lemma}[\cite{lenz2023function}]\label{gv bound}
    Let \(M\) be a positive integer. For any \(M \times M\) distance requirement matrix \(\DD \in \mathbb{N}^{M \times M}\) and any permutation \(\pi: [M] \to [M]\),
    \[
    N(\DD) \leq \min_{r \in \mathbb{N}} \left\{ r : 2^{r} > \max_{j \in [M]} \sum_{i=1}^{j-1} \vol(r, [\DD]_{\pi(i)\pi(j)} - 1) \right\}.
    \]
\end{lemma}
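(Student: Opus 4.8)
The plan is to prove the bound by an explicit greedy (Gilbert--Varshamov--type) construction of a $\DD$-code in $\Z_2^r$, building the codewords one at a time in the order prescribed by $\pi$. Since the permutation only relabels indices, I would first assume without loss of generality that $\pi$ is the identity, so that the goal becomes the following: whenever $r$ satisfies $2^r > \max_{j \in [M]} \sum_{i=1}^{j-1} \vol(r, [\DD]_{ij} - 1)$, there exist vectors $\pp_1, \ldots, \pp_M \in \Z_2^r$ with $d(\pp_i, \pp_j) \geq [\DD]_{ij}$ for all $i < j$, which is exactly the requirement for a $\DD$-code.

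The construction proceeds inductively. Suppose $\pp_1, \ldots, \pp_{j-1}$ have already been chosen so that all required pairwise distance constraints among them hold. To select $\pp_j$, observe that a candidate vector $\xx \in \Z_2^r$ violates the constraint with some previously chosen $\pp_i$ (for $i < j$) precisely when $d(\xx, \pp_i) < [\DD]_{ij}$, i.e.\ when $\xx$ lies in the Hamming ball of radius $[\DD]_{ij} - 1$ centered at $\pp_i$. For a fixed $i$, the number of such forbidden vectors is at most $\vol(r, [\DD]_{ij} - 1)$ by definition of $\vol$; when $[\DD]_{ij} = 0$ the ball has radius $-1$ and is empty, consistent with the absence of any constraint. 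Summing over all $i < j$ and applying the union bound, the total number of vectors forbidden for $\pp_j$ is at most $\sum_{i=1}^{j-1} \vol(r, [\DD]_{ij} - 1)$.

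The key step is then a counting comparison: if the total number $2^r$ of vectors in $\Z_2^r$ strictly exceeds this sum, at least one admissible choice of $\pp_j$ remains, and the induction continues. Since the hypothesized inequality forces $2^r$ to exceed the maximum over all $j$, it in particular exceeds the sum at each individual stage $j$, so every step of the greedy process succeeds and produces a full $\DD$-code of length $r$. Taking $r$ to be the smallest integer meeting the inequality yields $N(\DD) \leq r$, which is the claim.

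Being a greedy existence argument, this proof has no serious analytic obstacle; the only points requiring care are bookkeeping ones. I would make sure that the radius-$([\DD]_{ij}-1)$ convention correctly encodes the strict-versus-nonstrict requirement, so that a vector at distance exactly $[\DD]_{ij}$ is permitted, and I would confirm that zero entries of $\DD$ contribute no forbidden vectors. With these conventions the bound degrades gracefully to the classical Gilbert--Varshamov bound when all off-diagonal entries equal a common value $D$, in which case it specializes to the familiar sufficient condition for the existence of a code realizing $N(M,D)$.
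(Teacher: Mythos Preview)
Your greedy Gilbert--Varshamov construction is correct and is exactly the standard argument behind this lemma; the present paper does not supply its own proof but simply cites the result from~\cite{lenz2023function}, where the same sequential sphere-counting argument is used. The only cosmetic point is that a $\DD$-code requires $d(\pp_i,\pp_j)\ge [\DD]_{ij}$ for all $i,j$, not just $i<j$, but since Hamming distance is symmetric and distance requirement matrices are symmetric this is automatic.
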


For codes of small cardinality, i.e., when the code size is of the same order of magnitude as the minimum distance, the above bound can be improved based on Hadamard codes.

\begin{lemma}[\cite{lenz2023function,xia2024function}]\label{GV hadamard}
    For any \(M, D \in \mathbb{N}\) with \(D \geq 10\) and \(M \leq D^2\),
    \[
    N(M, D) \leq \frac{2D - 2}{1 - 2\sqrt{\ln(D)/D}}.
    \]
\end{lemma}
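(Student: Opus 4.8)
The plan is to exhibit, for the extremal case $M = D^2$, an explicit short code obtained by randomly puncturing a Hadamard code, and to certify that its pairwise distances stay above $D$ via a concentration argument. First I would reduce to $M = D^2$: since deleting codewords never decreases distances, any $M \le D^2$ of the codewords of a $D^2$-codeword code still have the same pairwise distances, so $N(M,D) \le N(D^2,D)$ and it suffices to build one code with $D^2$ codewords, minimum distance at least $D$, and length at most $\frac{2D-2}{1-2\sqrt{\ln D/D}}$. Note that $D \ge 10$ is exactly what makes $1 - 2\sqrt{\ln D/D} > 0$, so the target length is positive and finite.

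Next I would set up the construction. Let $N$ be the smallest power of two with $N \ge D^2$, so $D^2 \le N < 2D^2$, and take the rows of a Sylvester Hadamard matrix of order $N$, converted from $\pm 1$ to $0/1$. Any $D^2$ of these rows give vectors $c_1,\dots,c_{D^2} \in \Z_2^N$ that are pairwise at Hamming distance exactly $N/2$, and since $N \ge D^2 \ge 2D$ this code already has minimum distance $N/2 \ge D$. If the target length is at least $N$, this unpunctured code already certifies the bound. Otherwise set $r = \ceil{\frac{2D-2}{1-2\sqrt{\ln D/D}}} \le N$, choose a uniformly random set $R \subseteq [N]$ of $r$ coordinates to keep, and puncture to $R$.

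Then I would analyze a single pair. The punctured distance $X_{ab}$ between $c_a$ and $c_b$ equals the number of retained coordinates lying among the exactly $N/2$ positions where the two rows differ, so $X_{ab}$ is hypergeometric with mean $r/2$. By Hoeffding's inequality for sampling without replacement, $\Pr[X_{ab} \le D-1] \le \exp\!\big(-2(r/2 - D + 1)^2/r\big)$. The value of $r$ is calibrated precisely so that $r/2 - D + 1 \ge r\sqrt{\ln D/D}$ (this inequality is equivalent to $r \ge \frac{2D-2}{1-2\sqrt{\ln D/D}}$), which forces the exponent to be at least $2r\ln D/D \ge 4\ln D$ using $r \ge 2D$, and hence $\Pr[X_{ab} < D] \le D^{-4}$. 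A union bound over the $\binom{D^2}{2} < D^4/2$ pairs then gives total failure probability below $\tfrac12 < 1$, so some puncturing keeps every pairwise distance at least $D$; the resulting $D^2$ vectors in $\Z_2^r$ witness $N(D^2,D) \le r$.

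The main obstacle is matching the exact constant: one must verify that the Hoeffding exponent strictly beats the union-bound factor at the stated length, and absorb the integer ceiling $\ceil{\cdot}$ into the available slack. This is where the precise form of the bound is pinned down — the numerator $2D-2$ (rather than $2D$) and the hypothesis $D \ge 10$ together supply the room needed to push the probabilistic estimate below $1$ after rounding $r$ to an integer and after replacing $\ln\binom{D^2}{2}$ by $4\ln D$. Everything else (monotonicity in $M$, existence of Sylvester Hadamard matrices, and the hypergeometric mean computation) is routine.
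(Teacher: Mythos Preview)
The paper does not prove this lemma at all: it is quoted from \cite{lenz2023function,xia2024function} with no argument given, so there is no ``paper's own proof'' to compare against. The name of the lemma (based on Hadamard codes) already signals that your puncture-a-Hadamard-matrix approach is exactly the intended one in the cited sources, and your outline is essentially correct.

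One small wrinkle worth tightening: you set $r = \lceil (2D-2)/(1-2\sqrt{\ln D/D})\rceil$ and then exhibit a code of length $r$, but the statement asserts $N(M,D) \le (2D-2)/(1-2\sqrt{\ln D/D})$ as a real number, so taking the ceiling could in principle overshoot by $1$. You flag this yourself in the last paragraph; to close it cleanly you should either check that the slack from replacing $\binom{D^2}{2}$ by $D^4/2$ and from $r \ge 2D$ already absorbs one unit of length (it does for $D\ge 10$, since the failure probability you get is strictly below $1/2$, leaving room to shave one coordinate), or simply argue with $r = \lfloor \cdot \rfloor$ and verify that the Hoeffding exponent still beats $4\ln D$. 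Also, your case split when $r \ge N$ is genuinely needed for small $D$: at $D=10$ the target length is roughly $447$ while $N=128$, so the unpunctured Hadamard code of length $128$ with distance $64 \ge 10$ is what actually witnesses the bound there. Everything else is routine, as you say.
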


In specific applications, certain functions have garnered significant attention due to their practical relevance and theoretical importance. A number of such functions have been studied in the context of FCCs, with known results for their optimal redundancy. Below, we summarize the key results for specific types of functions in~\cref{tab:compara}.

\begin{table}[htbp]
    \centering
    \renewcommand{\arraystretch}{1.5} 
    \begin{tabular}{|c|c|c|c|}
        \hline
        Function & Parameter & Lower bound & Upper bound \\ \hline
        Binary & - & $2t$ & $2t$ \\ \hline
        Hamming weight wt$(\uu)$ & - & $\displaystyle\frac{10(t-1)}{3}$ & \(\displaystyle\frac{4t-2}{1 - 2\sqrt{\ln(2t)/(2t)}}\) \\ \hline
        Locally binary & $E$ & $2t$ & $2t$ \\ \hline
        Hamming weight distribution $\Delta_{T}(\uu)$ & $T \geq 2t+1$ & $2t$ & $2t$ \\ \hline
        Min-max $\textup{mm}_{w}(\boldsymbol{u})$ & $w \gg 2t$ & $2\log{w}+(t-2)\log{\log{w}}$ & $2\log{w}+t\log{\log{w}}$ \\ \hline
    \end{tabular}
    \caption{Some theoretical bounds for certain functions in~\cite{lenz2023function}}
    \label{tab:compara}
\end{table}

\section{Our contributions}
Existing results on the optimal redundancy of FCCs~\cite{lenz2023function} establish a robust theoretical framework, yet significant potential for improvement remains. Our primary objective is to narrow the gap between these bounds and to present systematic, explicit constructions of FCCs for a variety of functions. We then introduce our main results separately.

\subsection{Hamming weight function}
The \emph{Hamming weight function}, denoted as \( f(\uu) = \text{wt}(\uu) \), plays a crucial role in the study of FCCs due to its simplicity and widespread relevance in information theory, coding theory, and practical applications such as error correction and data compression. The Hamming weight function counts the number of non-zero bits in a binary vector \(\uu\), which is an essential operation in various coding schemes. More precisely, we define the \emph{Hamming weight} of a vector \(\uu\), denoted by \(\wt(\uu)\), as the Hamming distance between \(\uu\) and the zero vector. 

In many practical systems, reducing redundancy is highly desirable due to bandwidth and storage constraints. FCCs that aim to protect the Hamming weight enable more efficient use of redundancy. In this case, the redundancy should be just enough to allow the recovery of the Hamming weight under noisy conditions, without needing to recover the entire message. This provides a targeted, function-specific protection strategy that is often more efficient than traditional error-correcting codes (ECCs) that attempt to recover the entire message. 

From now on, for \( 0\le i,j\le k \), define \( \uu_i=(0^{k-i}1^{i}) \). In~\cite{lenz2023function}, the following lemma demonstrates that it is sufficient to specify the distance demands between the vectors \( \{\uu_i\}_{i=0}^{k} \).

\begin{lemma}[\cite{lenz2023function}]\label{lem:weight functionalcoding}
    For any positive integers \(k\) and \(t\), we have 
    \[
    r_{\wt}(k,t)=N(\DD_{\wt}(t,\uu_0,\uu_1,\dots,\uu_k)).
    \]
\end{lemma}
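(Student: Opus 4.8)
The plan is to establish the two inequalities $r_{\wt}(k,t) \ge N(\DD_{\wt}(t,\uu_0,\ldots,\uu_k))$ and $r_{\wt}(k,t) \le N(\DD_{\wt}(t,\uu_0,\ldots,\uu_k))$ separately. The lower bound is immediate: since $\uu_0,\ldots,\uu_k$ are $k+1$ distinct vectors in $\Z_2^k$ with $\wt(\uu_i)=i$, applying \cref{cor1} to this particular choice of $M=k+1$ vectors yields $r_{\wt}(k,t)\ge N(\DD_{\wt}(t,\uu_0,\ldots,\uu_k))$ at once.

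For the upper bound, I would construct an explicit FCC for all of $\Z_2^k$ whose redundancy matches $N(\DD_{\wt}(t,\uu_0,\ldots,\uu_k))$. Set $r:=N(\DD_{\wt}(t,\uu_0,\ldots,\uu_k))$ and let $\{\pp_0,\pp_1,\ldots,\pp_k\}\subseteq \Z_2^r$ be an optimal $\DD_{\wt}(t,\uu_0,\ldots,\uu_k)$-code realizing this length, indexed so that $d(\pp_i,\pp_j)\ge [\DD_{\wt}(t,\uu_0,\ldots,\uu_k)]_{ij}$ for all $i,j$. Define the encoder by $\enc(\uu)=(\uu,\pp_{\wt(\uu)})$; that is, the redundancy attached to a message depends only on its Hamming weight. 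It then remains to verify the defining distance condition: for any $\uu,\vv$ with $\wt(\uu)\ne\wt(\vv)$, one must show $d(\enc(\uu),\enc(\vv))\ge 2t+1$.

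The verification rests on one key inequality, which I expect to be the crux of the argument: for arbitrary $\uu,\vv\in\Z_2^k$ the Hamming distance dominates the weight gap, $d(\uu,\vv)\ge |\wt(\uu)-\wt(\vv)|$, and this gap is \emph{exactly} realized by the nested representatives, since $d(\uu_a,\uu_b)=|a-b|$. Writing $a=\wt(\uu)$ and $b=\wt(\vv)$ with (say) $a<b$, the relevant entry is $[\DD_{\wt}]_{ab}=\max\{2t+1-(b-a),\,0\}$. If $b-a\ge 2t+1$, then $d(\uu,\vv)\ge b-a\ge 2t+1$ and the condition holds regardless of the redundancy. Otherwise $d(\pp_a,\pp_b)\ge 2t+1-(b-a)$ by the $\DD$-code property, and since the Hamming distance decomposes additively over the systematic concatenation, $d(\enc(\uu),\enc(\vv))=d(\uu,\vv)+d(\pp_a,\pp_b)\ge (b-a)+\bigl(2t+1-(b-a)\bigr)=2t+1$. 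Hence the construction is a valid FCC of redundancy $r$, giving $r_{\wt}(k,t)\le N(\DD_{\wt}(t,\uu_0,\ldots,\uu_k))$, and combining with the lower bound completes the proof.

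The conceptual heart is the observation that the nested vectors $\uu_0,\ldots,\uu_k$ simultaneously (i) minimize the Hamming distance among all pairs of vectors with prescribed weights, so they impose the most stringent distance demands and thus control the lower bound, and (ii) permit the redundancy to be assigned purely as a function of weight, so that the deficit $2t+1-d(\uu,\vv)$ that must be absorbed by the redundancy never exceeds the deficit $2t+1-d(\uu_a,\uu_b)$ already covered by the representative $\DD$-code. These two facts are precisely what collapse the full $2^k\times 2^k$ distance-requirement problem to the $(k+1)\times(k+1)$ one.
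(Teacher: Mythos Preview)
Your proposal is correct and follows essentially the same approach as the paper. The paper does not prove \cref{lem:weight functionalcoding} directly (it is cited from \cite{lenz2023function}), but it proves the generalization \cref{lem:weightdistribution functionalcoding} for $\Delta_T$ by exactly the same two-step argument: the lower bound via \cref{cor1}, and the upper bound by setting $\enc(\uu)=(\uu,\pp_{\wt(\uu)})$ and verifying the distance condition using $d(\uu,\vv)\ge|\wt(\uu)-\wt(\vv)|$ together with $d(\uu_a,\uu_b)=|a-b|$.
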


As established in~\cite{lenz2023function}, the optimal redundancy for the Hamming weight function \(r_{\wt}(k,t)\) has known bounds, with the lower bound derived using Plotkin-like bounds for irregular-distance codes, while the upper bound derived from the Gilbert-Varshamov bound.

\begin{theorem}[\cite{lenz2023function}]\label{original bound}
    For any integer $k>2$, $r_{\wt}(k,1)=3$ and $r_{\wt}(k,2)=6$. Further, for $t \geq 5$ and $k>t$,
	$$ \frac{10(t-1)}{3} \leq r_{\wt}(k,t) \leq \frac{4t-2}{1-2\sqrt{\ln(2t)/(2t)}}. $$
\end{theorem}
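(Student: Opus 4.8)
The plan is to first reduce everything to one explicit matrix and then treat the lower bound, the upper bound, and the exact cases $t=1,2$ separately. By \cref{lem:weight functionalcoding} it suffices to understand $N(\DD)$ for $\DD=\DD_{\wt}(t,\uu_0,\dots,\uu_k)$. Since $\wt(\uu_i)=i$ are pairwise distinct and the supports are nested, $d(\uu_i,\uu_j)=|i-j|$, so the matrix is the banded Toeplitz-type object
\[
[\DD]_{ij}=\max\{2t+1-|i-j|,\,0\},
\]
which vanishes exactly when $|i-j|\ge 2t+1$. Every subsequent step works directly with this concrete matrix, and the trivial bound $r_{\wt}(k,t)\ge 2t$ from \cref{cor1} already shows the right order of magnitude.

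For the lower bound I would apply the Plotkin-type inequality (\cref{Plotkin Bound}) to the principal submatrix indexed by a short window of consecutive vectors $\uu_0,\dots,\uu_{m-1}$ with $m\le 2t+1$, so that all off-diagonal demands are positive. Grouping pairs by their index gap $d=|i-j|$ gives
\[
\sum_{0\le i<j\le m-1}[\DD]_{ij}=\sum_{d=1}^{m-1}(m-d)(2t+1-d),
\]
a linear function of $t$; dividing by $m^2$ (or by $m^2-1$ in the odd case) and specializing to a small fixed window, e.g.\ $m=5$, produces a bound of the advertised shape $\tfrac{10(t-1)}{3}$. I expect the delicate point here to be the bookkeeping that turns the raw Plotkin estimate into this clean closed form, including the choice of the odd-$M$ variant and the use of the integrality of $N(\DD)$. (Letting the window grow like $m\approx\sqrt{6t}$ instead of keeping it constant is exactly what would sharpen this to the improved lower bound, but for the present statement a fixed small window suffices.)

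For the upper bound when $t\ge 5$ I would give an explicit periodic redundancy assignment. Fix a code $\{c_0,\dots,c_{2t}\}\subseteq\Z_2^r$ of $2t+1$ words with pairwise Hamming distance at least $2t$, and set $\pp(\uu_i)=c_{\,i\bmod(2t+1)}$. If $1\le|i-j|\le 2t$ the two residues differ, hence $d(\pp(\uu_i),\pp(\uu_j))\ge 2t\ge 2t+1-|i-j|=[\DD]_{ij}$; and if $|i-j|\ge 2t+1$ then $[\DD]_{ij}=0$ and nothing is required. Thus $r_{\wt}(k,t)\le N(2t+1,2t)$, and since $D=2t\ge 10$ and $M=2t+1\le D^2$, the Gilbert--Varshamov--Hadamard bound (\cref{GV hadamard}) yields $N(2t+1,2t)\le \tfrac{4t-2}{1-2\sqrt{\ln(2t)/(2t)}}$, as claimed.

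Finally, for the exact values I would combine a sharp small-window Plotkin estimate with a matching construction. Applying \cref{Plotkin Bound} to $\{\uu_0,\uu_1,\uu_2\}$ gives $N\ge\tfrac52$ when $t=1$ and $N\ge\tfrac{11}{2}$ when $t=2$, so after rounding $r_{\wt}(k,1)\ge 3$ and $r_{\wt}(k,2)\ge 6$; the period-$3$ code $\{000,110,011\}$ realizes redundancy $3$ for $t=1$, and an explicit $5$-word distance-$4$ code of length $6$, plugged into the periodic assignment above, realizes $6$ for $t=2$. The main obstacle throughout is matching the integer rounding of the lower bound with an optimal explicit construction in these small cases, since the asymptotic Plotkin and Gilbert--Varshamov machinery does not by itself pin down the exact constants.
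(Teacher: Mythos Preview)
This theorem is quoted from \cite{lenz2023function} and is not proved in the present paper, so there is no ``paper's own proof'' to compare against. That said, your outline for the two asymptotic inequalities is correct and is the standard argument: with the window $m=5$ (which requires only $k>t\ge 5$), \cref{Plotkin Bound} gives
\[
\sum_{d=1}^{4}(5-d)(2t+1-d)=20t-10,\qquad N\ge \frac{4}{5^2-1}(20t-10)=\frac{10t-5}{3}\ge \frac{10(t-1)}{3};
\]
and your periodic assignment $\pp_i=c_{\,i\bmod(2t+1)}$ indeed yields $r_{\wt}(k,t)\le N(2t+1,2t)$, to which \cref{GV hadamard} applies with $D=2t\ge 10$ and $M=2t+1\le D^2$. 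The lower bounds for $t=1,2$ via the three-point Plotkin estimate, and the period-$3$ construction for $t=1$, are also fine.

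There is, however, a concrete gap in your $t=2$ upper bound. A ``$5$-word distance-$4$ code of length $6$'' does not exist: the classical Plotkin bound gives $A(6,4)\le 2\lfloor d/(2d-n)\rfloor=4$, so in fact $N(5,4)=7$ (achieved e.g.\ by any five codewords of the $[7,3,4]$ simplex code). Thus your periodic scheme, which replaces every nonzero demand by the uniform value $2t=4$, only certifies $r_{\wt}(k,2)\le 7$, not $6$. To reach $6$ one must genuinely exploit the irregular demands $[\DD]_{ij}=5-|i-j|$: a period-$5$ family $\pp_0,\dots,\pp_4\in\Z_2^6$ only needs $d(\pp_a,\pp_{a\pm 1})\ge 4$ and $d(\pp_a,\pp_{a\pm 2})\ge 3$, which is strictly weaker than a uniform distance-$4$ code. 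Even this weaker problem is more delicate than your sketch suggests (the parity identity $d(x,y)\equiv\wt(x)+\wt(y)\pmod 2$ forbids all five ``adjacent'' distances from equalling $4$ simultaneously), so the $t=2$ case requires a separate hand-built construction rather than a direct specialization of the asymptotic one.
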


Our main contribution is that, we nearly determine the optimal redundancy for the Hamming weight function. 

\begin{theorem}\label{lower bound weight function}
   For any integers $t\ge 5$ and $k>t$, we have
    \begin{equation*}
        r_{\wt}(k,t)\geq 4t-\frac{4}{3}\sqrt{6t+2}+2.
    \end{equation*}
\end{theorem}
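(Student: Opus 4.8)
The plan is to combine the reduction in \cref{lem:weight functionalcoding} with the Plotkin-type bound of \cref{Plotkin Bound}, applied not to the full list $\uu_0,\dots,\uu_k$ but to a carefully sized block of consecutive weight vectors. First I would record the shape of the distance requirement matrix: since $f=\wt$ separates every pair (all the $\uu_i$ have distinct weights) and $d(\uu_i,\uu_j)=|i-j|$, we have $[\DD_{\wt}(t,\uu_0,\dots,\uu_k)]_{ij}=\max\{2t+1-|i-j|,\,0\}$. By the monotonicity of $N(\cdot)$ under passing to subsets (\cref{cor1}), it suffices to lower bound $N(\DD_{\wt}(t,\uu_0,\dots,\uu_{M-1}))$ for a well-chosen $M$. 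I would take $M$ in the range $M\le 2t+1$ so that every off-diagonal entry is strictly positive, i.e.\ $[\DD_{\wt}]_{ij}=2t+1-|i-j|$, and I would also keep $M\le k+1$ so that the block genuinely exists; both constraints are harmless here because the eventual optimal $M$ is of order $\sqrt{t}$, and one checks $\sqrt{6t+2}\le\min\{2t+1,\,k+1\}$ whenever $t\ge 5$ and $k>t$.

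The one genuinely computational step is an exact evaluation of the total weight of this submatrix. Grouping pairs by their gap $d=|i-j|\in\{1,\dots,M-1\}$, each of which occurs exactly $M-d$ times, I would compute
\[
\sum_{0\le i<j\le M-1}[\DD_{\wt}]_{ij}=\sum_{d=1}^{M-1}(M-d)(2t+1-d)=\frac{M(M-1)(6t+2-M)}{6},
\]
which is a routine polynomial summation via the formulas for $\sum d$ and $\sum d^2$.

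Feeding this into \cref{Plotkin Bound} for even $M$ gives $N(\DD_{\wt})\ge \frac{4}{M^2}\cdot\frac{M(M-1)(6t+2-M)}{6}=\frac{2}{3}\bigl((6t+3)-M-\tfrac{6t+2}{M}\bigr)$, and minimizing $M+\frac{6t+2}{M}$ by AM--GM at $M=\sqrt{6t+2}$ yields precisely the target value $4t-\frac{4}{3}\sqrt{6t+2}+2$. The main obstacle is that $M$ must be a positive integer of a prescribed parity, so the continuous optimum is not literally attainable and the even-$M$ formula falls just short of the target by a second-order amount. I expect to resolve this by instead invoking the odd-$M$ case of \cref{Plotkin Bound}, where the denominator sharpens from $M^2$ to $M^2-1$, giving $N(\DD_{\wt})\ge\frac{2M(6t+2-M)}{3(M+1)}$; choosing $M$ to be an odd integer nearest $\sqrt{6t+2}$, this improved factor $\tfrac{M^2}{M^2-1}$ supplies more slack than the $O(1/\sqrt{t})$ rounding loss incurred by the concavity of the objective near its maximum. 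A short verification that the resulting integer bound dominates $4t-\frac{4}{3}\sqrt{6t+2}+2$ for every $t\ge 5$ then completes the argument; the slack in fact grows toward $\tfrac{2}{3}$ as $t\to\infty$, so the only delicate checking is for the smallest admissible values of $t$.
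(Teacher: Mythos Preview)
Your approach is essentially the same as the paper's: restrict to a block of $M\approx\sqrt{6t+2}$ consecutive weight vectors and apply the Plotkin inequality to the resulting distance requirement matrix, the only differences being cosmetic. The paper proves the more general \cref{thm:NewDistribution} (arbitrary $T$) and specializes to $T=1$, and it re-derives the Plotkin step by direct double counting rather than quoting \cref{Plotkin Bound}; for $T=1$ the extra ``consecutive blocks are optimal'' claim in the paper is vacuous, so the arguments coincide. One point where you are in fact more careful than the paper: the paper simply assumes $\sqrt{6t+2}$ is an integer and remarks that this ``does not significantly affect the lower bound,'' whereas you identify the integrality/parity issue and propose to absorb the rounding loss via the sharper odd-$M$ denominator $M^2-1$, which indeed provides slack tending to $\tfrac{2}{3}$ and covers all $t\ge 5$ after a finite check.
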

Moreover, we tighten the previously best-known upper bound from \((4+o(1))t\) to \((4-o(1))t\) as follows. Notably, the bound in~\cref{thm:constructionGary} is obtained through explicit constructions. Rather than relying on the existential result from~\cite{lenz2023function}, we reveal a novel connection between FCCs and the so-called Gray codes, which enables us to construct explicit FCCs with reduced redundancy.
\begin{theorem}\label{thm:constructionGary}
    Let $k,t\in\mathbb{N}$ be positive integers. When $2^{\ceil{\log(2t+1)}}-(2t+1)\le 2^{\frac{2}{3}\ceil{\log(2t+1)}}$, there exists a function-correcting code with encoding function $\enc:\mathbb{Z}_{2}^{k}\rightarrow \mathbb{Z}_{2}^{k+r}$ for $r=4t + p - \lceil \log(2t + 1) \rceil$,
where \(p\) can be interpreted as the number of ones in the binary representation of \(2^{\lceil \log(2t + 1) \rceil} - (2t+1)\). Moreover, when $t+1$ is a power of two, there exists an explicit construction of function-correcting code with redundancy $4t-\ceil{\log{t}}$.
\end{theorem}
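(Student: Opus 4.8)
The plan is to upper-bound $r_{\wt}(k,t)$ by explicitly exhibiting redundancy vectors, and the first move is to invoke \cref{lem:weight functionalcoding} to replace the full distance-requirement matrix by the much smaller one on the canonical vectors $\uu_0,\dots,\uu_k$. Since $d(\uu_i,\uu_j)=|i-j|$, constructing an FCC amounts to finding $\pp_0,\dots,\pp_k\in\Z_2^{r}$ with $d(\pp_i,\pp_j)\ge 2t+1-|i-j|$ whenever $1\le |i-j|\le 2t$ (and no constraint otherwise). Thus it suffices to build a sequence of redundancy vectors whose Hamming distances \emph{decrease} linearly with the index gap: nearby indices must be far apart, while indices that are $2t$ apart may almost coincide. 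This ``anti-distance'' profile is exactly the behaviour one can read off a Gray code, which is the source of the claimed connection.

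Second, I would make the construction periodic in order to collapse the unbounded range of $k$ into a finite object. Setting $q=\lceil\log(2t+1)\rceil$ and $N=2^{q}$ (the smallest power of two that is at least $2t+1$), I would define $\pp_i=\qq_{\,i\bmod N}$ for a cyclic family $\qq_0,\dots,\qq_{N-1}$, reducing the requirement to a \emph{cyclic anticode} condition: $d(\qq_a,\qq_b)\ge 2t+1-c(a,b)$, where $c(a,b)$ is the cyclic distance on $\Z_N$. This reduction is sound because $N\ge 2t+1$ forces $c(a,b)\le |i-j|$ for any indices reducing to $a,b$ with $|i-j|\le 2t$, so the cyclic bound implies the linear one; moreover $\lfloor N/2\rfloor=2^{q-1}\le 2t$, so every nontrivial cyclic distance carries a genuine positive requirement and nothing is lost.

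The core step is to realize the cyclic anticode from a Gray code on $q$ bits. A cyclic reflected Gray code $g_0,\dots,g_{N-1}$ gives a Hamiltonian traversal of the hypercube whose transition structure I would use to control, simultaneously for every window length $c$, how many coordinates separate $\qq_a$ from $\qq_b$. The idea is to group the $r$ coordinates into blocks governed by the Gray-code transitions so that consecutive $\qq$'s differ in at least $2t$ coordinates while a gap of $c$ permits up to $c-1$ additional agreements, matching the target $2t+1-c$ with as little slack as possible; here the parity identity $\wt(g_a\oplus g_b)\equiv c \pmod 2$ is the kind of fact that keeps the profile under control. Calibrating the block sizes against the defect $\delta:=2^{q}-(2t+1)$ is what produces the redundancy $4t+p-q$, with the correction $p$ equal to the number of ones in the binary expansion of $\delta$; the hypothesis $\delta\le 2^{\frac23 q}$ is precisely what guarantees that this calibration is feasible and that the lower-order terms collapse to exactly $p$. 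Specializing to $t+1=2^{m}$ gives $\delta=1$, $p=1$ and $q=m+1$, whence $r=4t+1-(m+1)=4t-\lceil\log t\rceil$, and every choice above is then explicit.

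The main obstacle I anticipate is the simultaneous verification of the distance bound across \emph{all} window lengths $c$ at once: the Gray-code difference $g_a\oplus g_b$ is not a function of $c$ alone but depends on where the window sits, so one must bound its block-weight uniformly over all windows, and the delicate direction is the \emph{odd}-length windows, where the required distances are largest and a single badly placed block can violate the bound. Extracting precisely $q$ coordinates of savings with only the $+p$ correction — rather than an unspecified $O(\log t)$ — is where the defect condition $\delta\le N^{2/3}$ and the fine choice of block sizes must be combined carefully.
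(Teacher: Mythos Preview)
Your reduction and periodisation (the first two paragraphs) are correct and match the paper exactly. The gap is in the third paragraph: you are trying to manufacture the cyclic anticode $\{\qq_a\}$ directly out of Gray-code transition blocks, and you correctly sense in your last paragraph that controlling $d(\qq_a,\qq_b)$ uniformly over all window lengths is the hard part. The paper bypasses this difficulty entirely by a second idea that you are missing: it takes a binary linear $[n,q,2t+1]$ code in systematic form, so that each codeword splits as $\cc=(\aaa,\bb)$ with $\aaa\in\Z_2^{q}$ the information part and $\bb\in\Z_2^{n-q}$ the parity part; it then uses the Gray code only to \emph{order} the information words $\aaa_0,\dots,\aaa_{N-1}$, and sets $\qq_i=\bb_i$. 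The verification is then a one-line triangle inequality: $d(\bb_i,\bb_j)=d(\cc_i,\cc_j)-d(\aaa_i,\aaa_j)\ge (2t+1)-|i-j|$, because the code has minimum distance $2t+1$ and the Gray-code path gives $d(\aaa_i,\aaa_j)\le |i-j|$. No window analysis, no parity identity, no block calibration is needed.

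With this in hand the arithmetic falls out immediately: the redundancy is $r=n-q$, so one wants $n$ as small as possible, i.e.\ a Griesmer-optimal $[n,q,2t+1]$ code. The hypothesis $2^{q}-(2t+1)\le 2^{\frac23 q}$ is not a calibration condition for blocks; it is exactly the existence condition for a Belov-type code meeting the Griesmer bound with $n=4t+p$, whence $r=4t+p-q$. The special case $t+1=2^{m}$ uses instead the (doubled, punctured) simplex code, a $[2^{m+1}-3,m,2^{m}-1]$ code, giving $r=4t-\lceil\log t\rceil$. So the missing ingredient in your plan is not a sharper Gray-code estimate but the observation that the Gray code should index the \emph{messages} of a good linear code, with the redundancy vectors being its \emph{parity tails}.
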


\subsection{Hamming weight distribution}

Let \(T\) be an arbitrary positive integer. The Hamming weight distribution function, an extension of the Hamming weight function, is defined as:
\[
f(\uu) = \Delta_T(\uu) = \left\lfloor \frac{\text{wt}(\uu)}{T} \right\rfloor,
\]
where \(\text{wt}(\uu)\) denotes the Hamming weight of the binary vector \(\uu\).

One can see that \(f(\uu)\) maps each input \(\uu\) to a value based on its Hamming weight, divided by \(T\). This creates a partition of the set of all binary vectors into bins, where each bin corresponds to a specific range of Hamming weights.

It was pointed out in~\cite{lenz2023function} that when $T\ge 4t+1$, $\Delta_{T}(\uu)$ is indeed $2t$-locally binary, therefore one can obtain that $r_{\Delta_T}(k, t) = 2t$ when $t\ge 4T+1$. Furthermore, in~\cite{lenz2023function} the authors completely determined the optimal redundancy when $T\ge 2t+1$.

\begin{theorem}[\cite{lenz2023function}]\label{thm:KnownDistribution}
    Let \( k, t, T \in \mathbb{N} \) be such that \( T \) divides \( k+1 \) and \( T \geq 2t+1 \). Then, the optimal redundancy for the Hamming weight distribution function satisfies  
    \[
    r_{\Delta_{T}}(k,t) = 2t.
    \]
\end{theorem}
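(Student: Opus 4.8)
The plan is to prove the two inequalities $r_{\Delta_T}(k,t)\ge 2t$ and $r_{\Delta_T}(k,t)\le 2t$ separately. The lower bound is immediate from \cref{cor1}: in the nondegenerate case $k+1\ge 2T$ (which $T\mid k+1$ forces once $k\ge T$) the image of $\Delta_T$ contains both $0$ and $1$, so $r_{\Delta_T}(k,t)\ge 2t$. All the work therefore lies in the matching construction. Mirroring \cref{lem:weight functionalcoding}, I would first reduce to the nested vectors $\uu_i=(0^{k-i}1^i)$: since $\Delta_T(\uu)$ depends only on $\wt(\uu)$ and $d(\uu_i,\uu_j)=|i-j|$ is the least possible distance between weight-$i$ and weight-$j$ vectors, setting $\pp(\uu):=\pp_{\wt(\uu)}$ turns any $\DD_{\Delta_T}(t,\uu_0,\dots,\uu_k)$-code into a genuine FCC (because then $d(\enc(\uu),\enc(\vv))\ge|\wt(\uu)-\wt(\vv)|+d(\pp_{\wt(\uu)},\pp_{\wt(\vv)})\ge 2t+1$ whenever $\Delta_T(\uu)\ne\Delta_T(\vv)$). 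Thus it suffices to exhibit $\pp_0,\dots,\pp_k\in\Z_2^{2t}$ with $d(\pp_i,\pp_j)\ge\max\{2t+1-|i-j|,0\}$ whenever $\lfloor i/T\rfloor\ne\lfloor j/T\rfloor$.

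The key simplification is to make the redundancy periodic: set $\pp_i=h(i\bmod T)$ for a single map $h\colon\{0,\dots,T-1\}\to\Z_2^{2t}$. Because $T\ge 2t+1$, two indices in non-adjacent bins are at distance $>2t$ (demand $0$), and two indices with equal residue lie in bins at distance $\ge T>2t$ (again demand $0$), so the periodic assignment is automatically consistent. Writing out the remaining adjacent-bin constraints, they collapse to a condition purely on residue pairs: for all $\sigma>\tau$ in $\{0,\dots,T-1\}$ with $\sigma-\tau\ge T-2t$,
\[
d\bigl(h(\sigma),h(\tau)\bigr)\ \ge\ 2t+1-T+(\sigma-\tau).
\]
In particular only the bottom $2t$ and top $2t$ residues ever participate, the largest demand $2t$ occurring at $(\sigma,\tau)=(T-1,0)$.

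For $h$ I would use a two-sided ramp. On the lower half of the residues assign suffix vectors $0^{2t-w}1^{w}$ of increasing weight $w=\min(s,2t)$, and on the upper half assign prefix vectors $1^{w}0^{2t-w}$ of weight $w=\max\{2t-(T-1-s),0\}$, so the weight climbs back to $2t$ at the top residue $T-1$; the two halves meet at the midpoint, and any ``middle'' residues (present only when $T\ge 4t+1$) are never constrained, so their values are irrelevant. Verification splits into three cases: two constrained residues on the same side receive nested prefix (resp.\ suffix) vectors, so $d(h(\sigma),h(\tau))=\sigma-\tau$, which beats the demand precisely because $T\ge 2t+1$; for a mixed pair, $h(\sigma)$ is a prefix vector of weight $w$ and $h(\tau)$ a suffix vector of weight $\tau$ with $w-\tau$ equal to the demand, and a direct count gives $d=\min\{w+\tau,\,4t-w-\tau\}\ge 2t+1-T+(\sigma-\tau)$ using only $0\le\tau$ and $\sigma\le T-1$.

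The step I expect to be the real obstacle is the overlap regime $2t+1\le T<4t$, where the bottom-$2t$ and top-$2t$ residue windows intersect. There a single residue must simultaneously stay far from the high residues (its ``low'' role) and far from the low residues (its ``high'' role), which makes any weight-only (pure prefix) assignment impossible, since one residue cannot carry both small and large weight. The prefix/suffix split is exactly what resolves this tension: placing the $1$'s of the two families on opposite ends makes their distance governed by $\min\{w+\tau,\,4t-w-\tau\}$ rather than by $|w-\tau|$. Once this case is settled, the disjoint-window range $T\ge 4t$ follows from the same formulas, and the divisibility $T\mid k+1$ is used only to guarantee the clean bin structure that makes the reduction to $\uu_0,\dots,\uu_k$ exact.
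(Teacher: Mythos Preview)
Your argument is correct: the periodic reduction $\pp_i=h(i\bmod T)$ is valid when $T\ge 2t+1$ (indices with the same residue, or in non-adjacent bins, automatically sit at distance $\ge T>2t$), and your two-sided ramp meets every residual constraint via exactly the same-side/mixed case split you outline.

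The paper does not itself prove \cref{thm:KnownDistribution}; it is quoted from~\cite{lenz2023function}. The closest thing in the paper is the construction behind \cref{thm:constructionweightdistribution}(1) via \cref{thm:constructionweightdistribution2}, which writes the redundancy as $(\qq_{\Delta_T(\uu)},\pp_{\wt(\uu)\bmod T})$ with $\pp_s=(0^{T-1-s}1^s)\in\Z_2^{T-1}$ and $\qq_i$ alternating between $0^{2t-T+1}$ and $1^{2t-T+1}$. That yields redundancy $(2t-T+1)+(T-1)=2t$ only on the range $t+1\le T\le 2t+1$; for $T>2t+1$ the $\pp$-block alone already has length $T-1>2t$, so the paper's framework does not cover that range and simply defers to the citation. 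Your residue-only construction, by contrast, handles all $T\ge 2t+1$ uniformly.

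One remark: your prefix/suffix split is more than you need, and the ``obstacle'' you anticipate is not real. The one-sided assignment $h(s)=0^{\,2t-\min(s,2t)}1^{\min(s,2t)}$ already works for every $T\ge 2t+1$: any constrained pair has $\tau\le 2t-1$ and $\sigma\le T-1$, so
\[
d\bigl(h(\sigma),h(\tau)\bigr)=\min(\sigma,2t)-\tau\ \ge\ \bigl(\sigma-(T-1-2t)\bigr)-\tau=(2t+1-T)+(\sigma-\tau),
\]
which is exactly the demand. Thus in the overlap regime $2t+1\le T<4t$ a residue need not ``carry both small and large weight''; simply capping the weight at $2t$ suffices. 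The prefix/suffix idea is the right instinct for the genuinely harder range $t+1\le T\le 2t$, where the paper's two-part $(\qq,\pp)$ decomposition plays an analogous role, but it is unnecessary here.
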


This result indicates that for large enough \(T\), the redundancy required to recover the function evaluation is directly proportional to the error tolerance \(t\), and specifically, it is \(2t\). However, for smaller values of \(T\), particularly when \(T \leq 2t\), the theoretical understanding of redundancy has been limited, and the existing bounds are less effective or non-existent.

This work aims to study the case when \(T \leq 2t\), addressing the situation where the weight intervals are closely packed and potentially overlap. For these cases, the existing approach does not provide tight or practical bounds, leaving a gap in the theoretical framework of FCCs. Our contribution in this regard is to derive refined bounds that accurately capture the optimal redundancy for the Hamming weight distribution function when \(T \leq 2t\), thereby extending the applicability of FCCs in more constrained settings.

\begin{theorem}\label{thm:NewDistribution}
   Let $k,t,T$ be positive integers with $k> \sqrt{T(6t-T+3)}$, then 
    \[ r_{\Delta_{T}}(k,t)\ge 4t-\frac{4}{3}\cdot\sqrt{T(6t-T+3)}+2. \] 
\end{theorem}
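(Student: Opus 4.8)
The plan is to deduce Theorem~\ref{thm:NewDistribution} from the lower-bound machinery by reducing the Hamming weight distribution function to a suitable sub-collection of message vectors and applying the Plotkin-type bound from \cref{Plotkin Bound}. Following the pattern of \cref{lem:weight functionalcoding}, I would restrict attention to the nested vectors $\uu_i = (0^{k-i}1^i)$ for $0 \le i \le k$, for which $d(\uu_i,\uu_j) = |i-j|$ and $\Delta_T(\uu_i) = \floor{i/T}$. The key observation is that $\Delta_T(\uu_i) \ne \Delta_T(\uu_j)$ precisely when $\floor{i/T} \ne \floor{j/T}$, i.e.\ when $i$ and $j$ fall into different weight-bins of width $T$. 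So the distance requirement matrix $\DD_{\Delta_T}(t,\uu_0,\dots,\uu_k)$ has entry $\max\{2t+1-|i-j|,\,0\}$ when $i,j$ lie in different bins, and $0$ when they share a bin. By \cref{cor1}, $r_{\Delta_T}(k,t) \ge N(\DD_{\Delta_T}(t,\uu_{i_1},\dots,\uu_{i_M}))$ for any chosen sub-collection, so I have freedom to pick the indices that maximize the Plotkin bound.

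The main step is to choose a good window of consecutive indices and a spacing to feed into \cref{Plotkin Bound}. I would select $M$ indices (or equivalently a contiguous range of weights) so that the average of the off-diagonal entries $\frac{4}{M^2}\sum_{i<j}[\DD]_{ij}$ is as large as possible. There is a genuine tension here: taking indices very close together makes each nonzero entry $2t+1-|i-j|$ large (close to $2t+1$), but the within-bin pairs contribute $0$, so packing too many indices into a single bin wastes pairs; spreading indices across many bins kills the within-bin zeros but then $|i-j|$ grows and the entries $2t+1-|i-j|$ shrink or vanish once $|i-j|\ge 2t+1$. The optimization therefore balances the bin-width penalty $T$ against the distance decay, and the quantity $T(6t-T+3)$ appearing in the bound strongly suggests the extremal configuration is a range of roughly $2t+1$ consecutive weights partitioned into bins of width $T$, with the correction term arising from summing $(2t+1-|i-j|)$ over cross-bin pairs and subtracting the lost within-bin contributions.

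Concretely, I expect to take the indices to be an interval $\{a, a+1, \dots, a+M-1\}$ with $M \approx 2t+1$ (the distance threshold), compute $\sum_{i<j}\max\{2t+1-|i-j|,0\}$ over all pairs, and then subtract the contribution of pairs lying in the same bin of width $T$ — each such within-bin block of size $T$ removes a sum of the form $\sum_{0 < j-i < T}(2t+1-(j-i))$. Writing $S_{\text{all}}$ for the full Plotkin sum over the interval and $S_{\text{bin}}$ for the total within-bin loss, the bound becomes $\frac{4}{M^2}(S_{\text{all}} - S_{\text{bin}})$, and optimizing over $M$ (and the alignment $a$ modulo $T$) should produce $4t - \frac{4}{3}\sqrt{T(6t-T+3)} + 2$. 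The hypothesis $k > \sqrt{T(6t-T+3)}$ is exactly what guarantees the chosen index window fits inside $\{0,1,\dots,k\}$, so that all the $\uu_i$ are legitimate vectors in $\Z_2^k$.

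The hardest part will be the discrete optimization and the careful bookkeeping of the within-bin losses: the sum $S_{\text{bin}}$ depends on how the interval of $M$ consecutive weights aligns with the bin boundaries (i.e.\ on $a \bmod T$ and on how many complete versus partial bins the window covers), so I would need either to choose the alignment cleanly or to bound the alignment-dependent terms uniformly. Getting the constant $\tfrac{4}{3}$ and the exact form $\sqrt{T(6t-T+3)}$ right — rather than just an $\Omega(t)$-type estimate — requires treating these boundary contributions exactly and then completing the square in $M$ to locate the optimal window size; I anticipate this is where most of the routine-but-delicate calculation lives. When $T = 1$, the bound should specialize to the Hamming weight lower bound $4t - \tfrac{4}{3}\sqrt{6t+2}+2$ of \cref{lower bound weight function}, which provides a useful consistency check on the constants.
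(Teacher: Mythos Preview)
Your proposal is correct and matches the paper's approach: restrict to the nested vectors $\uu_i$, apply a Plotkin-type double-counting bound to a contiguous window of indices, subtract the within-bin contributions, and optimize the window size. The paper sidesteps your alignment worry by taking the window to be a union of $m$ \emph{complete} bins, $B=[0,mT-1]$, so that the cross-bin pair count is exactly $m(m-1)T^2$ and the within-bin loss is exactly $m$ copies of $\sum_{0<\ell<T}(T-\ell)\ell$ with no boundary cases; the optimum then comes out at $mT=\sqrt{T(6t-T+3)}$ (so your reading of the hypothesis on $k$ was right, whereas the heuristic $M\approx 2t+1$ is off---for $T=1$ the optimal window is on the order of $\sqrt{6t}$, not $2t$).
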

Note that the Hamming weight function \(f(\uu)=\wt(\uu)\) can be viewed as the special case of Hamming weight distribution function \(f(\uu)=\floor{\frac{\wt(\uu)}{T}}\) when \(T=1\). Therefore~\cref{lower bound weight function} is a special case of~\cref{thm:NewDistribution}.

We also obtain some new upper bound in this case based on explicit constructions. In particular, we obtain the exact value of redundancy when $T\ge t+1$, which extends~\cref{thm:KnownDistribution}. Moreover, when $T=o(t)$, the upper bound $4t-o(t)$ is almost tight if $t$ is large enough.

\begin{theorem}\label{thm:constructionweightdistribution}
Let \( k, t \in \mathbb{N} \) be positive integers, and let \( \Delta_{T} \) be the Hamming weight distribution function.
\begin{enumerate} 
\item[\textup{(1)}] If \( T \geq t+1 \), there exists a function-correcting code with redundancy \( 2t \), which is optimal. \item[\textup{(2)}] If \( T \leq t \), define \( z := \ceil{\frac{2t+1}{T}} \). If \( 2^{\ceil{\log z}} - z \leq 2^{\frac{2}{3} \ceil{\log z}} \), then there exists a function-correcting code with redundancy \[ 4t+(p-1)T-\ceil{\log z} \cdot T+1, \] where \( p \) denotes the number of ones in the binary representation of \( 2^{\lceil \log z \rceil} - z \). In particular:
\begin{itemize}
\item If \( \frac{2t+1}{T} = 2^m - 1 \) for some integer \( m \), then the redundancy is \[ 4t - \ceil{\log \frac{2t+1}{T}} \cdot T + 1. \]
\item If \( t \) is sufficiently large, \( T = o(t) \), and \( 2^{\ceil{\log z}} - z \leq 2^{\frac{2}{3} \ceil{\log z}} \), then the redundancy is \(4t-o(t).\) \end{itemize}
\end{enumerate} 
\end{theorem}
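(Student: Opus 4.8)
The plan is to prove \cref{thm:constructionweightdistribution} by explicit construction, treating the two regimes $T\geq t+1$ and $T\leq t$ separately. For part (1), the key observation is that when $T\geq t+1$, consecutive bins in the weight distribution are ``far apart'' in the sense that if $\Delta_T(\uu_1)\neq\Delta_T(\uu_2)$, then $\wt(\uu_1)$ and $\wt(\uu_2)$ differ by at least $1$ but may land in adjacent bins; I would examine the distance requirement matrix $\DD_{\Delta_T}(t,\uu_0,\dots,\uu_k)$ restricted to the weight-representative vectors $\uu_i=(0^{k-i}1^i)$ and show that the effective distance constraints collapse to those of a $2t$-locally-binary-type function. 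Concretely, since $d(\uu_i,\uu_j)=|i-j|$ for these nested vectors, the pair is only constrained (entry $2t+1-|i-j|>0$) when $|i-j|\leq 2t$; combined with $\Delta_T(\uu_i)\neq\Delta_T(\uu_j)$ forcing $\lfloor i/T\rfloor\neq\lfloor j/T\rfloor$, the threshold $T\geq t+1$ should guarantee that within any window of $2t$ consecutive weights at most two distinct bin-values appear, reducing the problem to a binary distance-$2t+1$ requirement solvable with redundancy exactly $2t$. Optimality then follows immediately from \cref{cor1}, which gives $r_f(k,t)\geq 2t$ whenever $|\im(f)|\geq 2$.

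For part (2), the strategy is to adapt the Gray-code construction behind \cref{thm:constructionGary}. The guiding idea is that $\Delta_T$ with parameter $T$ behaves like the Hamming weight function after rescaling weights by $T$: the relevant number of distinct function values inside any distance-$(2t+1)$ window is governed by $z=\ceil{\tfrac{2t+1}{T}}$, which plays the role that $2t+1$ plays in the pure weight case. I would set up an encoding in which the redundancy vector cycles through a Gray-code-like sequence designed so that any two message vectors landing in different bins that are within Hamming distance $2t$ receive redundancy vectors at Hamming distance at least $2t+1-d(\uu_i,\uu_j)$. The parameter $p$ counting ones in the binary representation of $2^{\ceil{\log z}}-z$ arises from the ``defect'' between $z$ and the next power of two, exactly as in the weight-function construction, and the condition $2^{\ceil{\log z}}-z\leq 2^{\frac{2}{3}\ceil{\log z}}$ is the technical hypothesis ensuring the Gray-code padding can be carried out efficiently. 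The two bullet specializations then follow by substitution: when $\tfrac{2t+1}{T}=2^m-1$ the defect vanishes, giving $p=0$ and the clean expression $4t-\ceil{\log\tfrac{2t+1}{T}}\cdot T+1$; and when $T=o(t)$ the term $\ceil{\log z}\cdot T=o(t)$ while $4t$ dominates, yielding $4t-o(t)$.

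The main obstacle I anticipate is part (2): correctly transporting the Gray-code construction of \cref{thm:constructionGary} (which handles $T=1$) to general $T$ while tracking how the redundancy scales. The delicate point is that bins now have ``width'' $T$, so within a single bin there are $T$ distinct weight values that must all receive \emph{compatible} redundancy (they need no mutual separation since they share a function value), yet the construction must simultaneously separate vectors straddling bin boundaries at the correct irregular distances $2t+1-|i-j|$. I would need to verify that the Gray-code sequence can be ``stretched'' by repeating each codeword $T$ times (or an analogous blocking) without violating any distance constraint, and that the counting argument producing $p$ and the $-\ceil{\log z}\cdot T$ savings survives this stretching. Establishing the precise additive constant (the ``$+1$'' and the $(p-1)T$ versus $pT$ bookkeeping) will require careful alignment of indices, but I expect no conceptual barrier once the $T=1$ construction is in hand; the final redundancy formula should emerge by substituting $z$ for $2t+1$ and scaling the logarithmic savings by $T$ throughout the \cref{thm:constructionGary} analysis.
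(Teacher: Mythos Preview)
Your proposal for part~(1) rests on the claim that when $T\geq t+1$, any window of $2t$ consecutive weights meets at most two bins. This is false. Take $T=t+1$ with $t\geq 3$ and consider the weights $i=T-1$ and $j=2T$: then $\lfloor i/T\rfloor=0$, $\lfloor j/T\rfloor=2$, and $|i-j|=T+1=t+2\leq 2t$, so three distinct bin values occur among weights within mutual distance $2t$. Worse, a redundancy vector depending only on the bin cannot achieve $r=2t$ in this range: the requirement $d(\pp_a,\pp_{a+1})\geq 2t$ forces $\pp_{a+1}=\overline{\pp_a}$ in $\{0,1\}^{2t}$, hence $\pp_{a+2}=\pp_a$, while the three-bin example above demands $d(\pp_a,\pp_{a+2})\geq 2t-T=t-1>0$. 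So the locally-binary reduction you sketch cannot close.

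The paper's construction instead splits the redundancy into \emph{two} pieces, and this is the idea you are missing in both parts. One appends $(\qq_{\Delta_T(\uu)},\,\pp_{\wt(\uu)\bmod T})$, where $\pp_i=(0^{T-1-i}1^{i})\in\Z_2^{T-1}$ records the position of $\wt(\uu)$ inside its bin, and the $\qq$-vectors need only satisfy the \emph{rescaled} constraint $d(\qq_a,\qq_b)\geq 2t+1-T|a-b|$. For part~(1) one then takes $\qq_a\in\{0^{2t-T+1},1^{2t-T+1}\}$ according to the parity of $a$; the $\pp$-component supplies exactly the extra distance that the bin-only argument lacked, and the total length is $(2t-T+1)+(T-1)=2t$. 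For part~(2), your instinct to import \cref{thm:constructionGary} with $z=\lceil(2t+1)/T\rceil$ in place of $2t+1$ is correct, and so is the $T$-fold repetition, but the repetition is applied \emph{coordinatewise} to the $\qq'$-vectors (to turn the constraint $d(\qq'_a,\qq'_b)\geq z-|a-b|$ into $d(\qq_a,\qq_b)\geq T(z-|a-b|)\geq 2t+1-T|a-b|$), not by assigning the same redundancy across a bin; the additional $T-1$ coordinates from the $\pp$-component are still needed, and they account for the ``$+1$'' and the $(p-1)T$ bookkeeping you flagged.
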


\medskip
{\bf \noindent Structure of this paper.} 
The remainder of this paper is structured as follows. In~\cref{sec:HammingWeight}, we focus on proving~\cref{thm:NewDistribution}, which also implies~\cref{lower bound weight function}. The results concerning explicit constructions will be presented in~\cref{sec:construction upperbound}. Finally, in~\cref{sec:Concluding}, we conclude the paper and present new results on the symbol-pair channel, along with potential directions for further research.

\section{Lower bounds on redundancy: Proof of~\cref{thm:NewDistribution}}\label{sec:HammingWeight}

For \(0 \leq i, j \leq k\), recall that \(\uu_i = (0^{k-i}1^{i})\). For simplicity, let \(d_{ij}\) denote the distance \(d(\uu_i, \uu_j) = |i - j|\). Let \(T \in \mathbb{N}\). Without loss of generality, we assume that \(T\) divides \(k+1\). In this section, we primarily focus on FCCs for the weight distribution function \(f(\uu) = \Delta_T(\uu) = \floor{\frac{\textup{wt}(\uu)}{T}}\), where \(T \leq 2t\). In previous work~\cite{lenz2023function}, the authors demonstrated that if \(T \geq 2t + 1\), then \(r_{\Delta_T}(k, t)\) can be determined explicitly as \(2t\). However, the case when \(T \leq 2t\) remains unresolved. Surprisingly, we show that the behavior in this regime may differ significantly.

Before the formal proof, we first introduce the following auxiliary lemma, which extends~\cref{lem:weight functionalcoding} to the Hamming weight distribution function \(\Delta_T\).

\begin{lemma}\label{lem:weightdistribution functionalcoding}
    For any positive integers \(k\) and \(t\) and \(\{\uu_i = (0^{k-i}1^{i})\}_{i=0}^{k}\), we have 
    \[
    r_{\Delta_T}(k, t) = N(\DD_{\Delta_T}(t, \uu_0, \uu_1, \dots, \uu_k)).
    \]
\end{lemma}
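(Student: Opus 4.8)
The plan is to prove the two inequalities
\[
r_{\Delta_T}(k,t) \ge N(\DD_{\Delta_T}(t,\uu_0,\dots,\uu_k)) \quad\text{and}\quad r_{\Delta_T}(k,t) \le N(\DD_{\Delta_T}(t,\uu_0,\dots,\uu_k))
\]
separately, mirroring the template of \cref{lem:weight functionalcoding} but exploiting the single new structural fact that $\Delta_T$, exactly like $\wt$, is a function of the Hamming weight alone. The lower bound is immediate: the chain $\{\uu_0,\dots,\uu_k\}$ consists of $k+1$ distinct vectors in $\Z_2^k$, so applying \cref{cor1} to precisely these vectors yields $r_{\Delta_T}(k,t) \ge N(\DD_{\Delta_T}(t,\uu_0,\dots,\uu_k))$.

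For the upper bound I would start from an optimal $\DD$-code $\{\pp_0,\pp_1,\dots,\pp_k\}\subseteq\Z_2^r$ realizing $r=N(\DD_{\Delta_T}(t,\uu_0,\dots,\uu_k))$, where $\pp_i$ is the codeword attached to the representative $\uu_i$ of weight $i$. I then define a global encoding that assigns redundancy purely by weight class, $\enc(\uu)=(\uu,\pp_{\wt(\uu)})$, and verify that this is a valid FCC of redundancy $r$ over all of $\Z_2^k$, which forces $r_{\Delta_T}(k,t)\le r$.

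The verification rests on two ingredients. First, since $\Delta_T(\uu)$ depends only on $\wt(\uu)$, any two messages $\uu_1,\uu_2$ with $\Delta_T(\uu_1)\neq\Delta_T(\uu_2)$ have weights $a:=\wt(\uu_1)$ and $b:=\wt(\uu_2)$ obeying $\Delta_T(\uu_a)\neq\Delta_T(\uu_b)$, so the $\DD$-code property gives $d(\pp_a,\pp_b)\ge [\DD_{\Delta_T}]_{ab}=\max\{2t+1-|a-b|,0\}$. Second, the inequality $d(\uu_1,\uu_2)\ge |\wt(\uu_1)-\wt(\uu_2)|=|a-b|=d(\uu_a,\uu_b)$ holds for all binary vectors, by subadditivity of weight under $\oplus$. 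Combining them,
\[
d(\enc(\uu_1),\enc(\uu_2)) = d(\uu_1,\uu_2)+d(\pp_a,\pp_b) \ge |a-b| + \max\{2t+1-|a-b|,0\} \ge 2t+1,
\]
which is exactly the FCC distance requirement.

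The step I expect to require the most care is confirming that assigning redundancy solely by weight class introduces no hidden violations: messages sharing a weight share both redundancy and function value, hence impose no demand, while messages of different weights are governed entirely by the finite chain through the inequality $d(\uu_1,\uu_2)\ge|\wt(\uu_1)-\wt(\uu_2)|$. There is no genuinely hard obstacle here; the whole content is the observation that $\Delta_T$ is weight-determined, which is what allows the $(k+1)$-vector chain $\{\uu_i\}$ to control the distance demands among all $2^k$ messages simultaneously.
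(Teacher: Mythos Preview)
Your proposal is correct and follows essentially the same route as the paper: the lower bound via \cref{cor1} applied to the chain $\{\uu_0,\dots,\uu_k\}$, and the upper bound by taking an optimal $\DD_{\Delta_T}$-code $\{\pp_i\}$, extending it to all of $\Z_2^k$ via $\enc(\uu)=(\uu,\pp_{\wt(\uu)})$, and checking the FCC condition using $d(\uu_1,\uu_2)\ge |\wt(\uu_1)-\wt(\uu_2)|$. If anything, your write-up is slightly cleaner in that you explicitly invoke $\Delta_T(\uu_a)\neq\Delta_T(\uu_b)$ before applying the $\DD$-code bound, whereas the paper's phrasing momentarily overstates the distance guarantee for pairs with equal $\Delta_T$-value (harmlessly, since those pairs are never used).
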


\begin{proof}
    On one hand, by~\cref{cor1}, we have \( r_{\Delta_T}(k, t) \geq N(\DD_{\Delta_T}(t, \uu_0, \uu_1, \dots, \uu_k)) \). It then remains to prove \( r_{\Delta_T}(k, t) \le N(\DD_{\Delta_T}(t, \uu_0, \uu_1, \dots, \uu_k)). \)
   
    Let \(\pp_0, \pp_1, \ldots, \pp_k \in \mathbb{Z}_2^r\) be a sequence of redundancy vectors corresponding to \(\uu_0, \uu_1, \ldots, \uu_k\), respectively. By definition, for any \(0 \leq i \neq j \leq k\), 
    \[
    d((\uu_i, \pp_i), (\uu_j, \pp_j)) \geq 2t + 1.
    \]
   Let \(\uu, \vv \in \Z_2^k\) be distinct vectors with \(\wt(\uu) = i\) and \(\wt(\vv) = j\), where \(\floor{\frac{i}{T}} \neq \floor{\frac{j}{T}}\). First obviously we have \(\Delta_T(\uu) \neq \Delta_T(\vv)\). Moreover, we can see
    \[
    \begin{aligned}
        d((\uu, \pp_i), (\vv, \pp_j)) &= d(\uu, \vv) + d(\pp_i, \pp_j) \\
        &\geq |\wt(\uu) - \wt(\vv)| + d((\uu_i, \pp_i), (\uu_j, \pp_j)) - d(\uu_i, \uu_j) \\
        &\geq |i-j| + 2t + 1 - d_{ij} \\
        &= 2t + 1.
    \end{aligned}
    \]
    Then we can define an encoding function as \(\enc(\uu) = (\uu, \pp_{\wt(\uu)})\) for any message vector \(\uu \in \Z_2^k\), which implies 
    \[
    r_{\Delta_T}(k, t) \leq N(\DD_{\Delta_T}(t, \uu_0, \uu_1, \dots, \uu_k)).
    \]
 This completes the proof.
\end{proof}
 We then provide the formal proof of~\cref{thm:NewDistribution}.
\begin{proof}[Proof of~\cref{thm:NewDistribution}]

Let \(\pp_0, \pp_1, \ldots, \pp_k \in \mathbb{Z}_2^r\) be a sequence of redundancy vectors corresponding to \(\uu_0, \uu_1, \ldots, \uu_k\), respectively. That is to say, \(d((\uu_i,\pp_i),(\uu_j,\pp_j))\ge 2t+1\) for any \(0\le i\ne j\le k\). Without loss of generality, we may assume that the length of the sequence \(\{\pp_i\}_{i=0}^{k}\) is minimal among all possible sets of redundancy vectors satisfying this condition. Then, we have \(r_{\Delta_{T}}(k,t) = r\) according to Lemma~\ref{lem:weightdistribution functionalcoding}. For simplicity, in the proof of the following result, we assume that \(\sqrt{\frac{6t - T + 3}{T}}\) is an integer, as this assumption does not significantly affect the lower bound on \(r\). Let \(m := \sqrt{\frac{6t - T + 3}{T}}\). 

We aim to carefully select a subset \(A \subseteq \left[0, \frac{k+1}{T} - 1\right]\) of size \(m\) and define \(B = \bigcup_{i \in A} [iT, iT + T - 1] \subseteq [0, k]\), and then apply double counting to the total Hamming distances \( \{d(\pp_i,\pp_j)\}_{i,j\in B} \) with \(\floor{\frac{i}{T}}\ne \floor{\frac{j}{T}}\) to derive a lower bound for \(r\) by Lemma~\ref{lem:weightdistribution functionalcoding}. Note that \(A\) uniquely determines \(B\). After selecting \(A\), observe that for any distinct \(i, j \in B\) with \(\lfloor \frac{i}{T} \rfloor \neq \lfloor \frac{j}{T} \rfloor\), the following inequality holds:
\[
d((\uu_i, \pp_i), (\uu_j, \pp_j)) = d_{ij} + d(\pp_i, \pp_j) \geq 2t + 1.
\]
Consequently, summing over all ordered distinct pairs of \(i,j\in A\) and \(\Delta_{T}(\uu_i)\ne \Delta_{T}(\uu_j)\), we have
\[
\sum_{\substack{i,j \in B \\ \left\lfloor \frac{i}{T} \right\rfloor \neq \left\lfloor \frac{j}{T} \right\rfloor}} d((\uu_i, \pp_i), (\uu_j, \pp_j)) = \sum_{\substack{i,j \in B \\ \left\lfloor \frac{i}{T} \right\rfloor \neq \left\lfloor \frac{j}{T} \right\rfloor}} (d_{ij} + d(\pp_i, \pp_j)) \geq \sum_{\substack{i,j \in B \\ \left\lfloor \frac{i}{T} \right\rfloor \neq \left\lfloor \frac{j}{T} \right\rfloor}} (2t + 1) \geq 2(2t + 1) \binom{m}{2} T^2.
\]

On the other hand, for each \(h \in [r]\), let \(a_h\) denote the number of ones appearing in the \(h\)-th coordinate of \(\{\pp_a\}_{a \in B}\). Note that \(|B| = mT\). Applying double counting, we obtain
\[
\sum_{\substack{i,j \in B \\ \left\lfloor \frac{i}{T} \right\rfloor \neq \left\lfloor \frac{j}{T} \right\rfloor}} d(\pp_i, \pp_j) \leq \sum_{i \neq j \in B} d(\pp_i, \pp_j) = 2 \sum_{h=1}^r a_h (mT - a_h) \leq \frac{r m^2 T^2}{2}.
\]
Combining the above inequalities, we conclude that
\begin{equation}\label{weight2}
    \sum_{\substack{i,j \in B \\ \left\lfloor \frac{i}{T} \right\rfloor \neq \left\lfloor \frac{j}{T} \right\rfloor}} d_{ij} + \frac{r m^2 T^2}{2} \geq (2t + 1) m(m - 1) T^2.
\end{equation}

By definition, \( d_{ij} \) is independent of \( r \). Define  
\[
S := \sum_{\substack{i,j\in B \\ \floor{\frac{i}{T}} \neq \floor{\frac{j}{T}}}} d_{ij} = \sum_{\substack{i,j\in B \\ \floor{\frac{i}{T}} \neq \floor{\frac{j}{T}}}} |i - j|.
\] 

Therefore, to derive a better lower bound for \(r\), we aim to minimize \( S \). This requires carefully selecting the subset \(A\). The following claim is crucial, as it shows that it is sufficient to choose \(A\) as a sequence of consecutive integers.
\begin{claim}\label{claim:Consecutive2}
    $S$ is minimized when $A$ consists of $m$ consecutive non-negative integers.
\end{claim}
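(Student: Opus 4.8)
The plan is to show that among all choices of an $m$-element set $A \subseteq \left[0, \frac{k+1}{T}-1\right]$, the quantity $S$ is minimized when $A$ is a block of consecutive integers. The key observation is that $S$ has a clean structure: since $B = \bigcup_{i\in A}[iT, iT+T-1]$ is a union of $m$ intervals each of length $T$, and the constraint $\floor{i/T}\neq\floor{j/T}$ excludes pairs lying in the same interval, the sum $S$ decomposes as a sum over unordered pairs of distinct ``blocks'' indexed by $A$. For two blocks indexed by $a < b$ in $A$, the contribution $\sum_{p,q\in[0,T-1]} |(aT+p)-(bT+q)|$ depends only on the gap $b-a$, so I would first write $S = \sum_{\{a,b\}\subseteq A} g(b-a)$ for an explicit function $g(\delta)$ that is (strictly) increasing in $\delta \ge 1$.

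Given this reduction, the claim becomes a purely combinatorial statement: if $g$ is a strictly increasing function of the block-gap, then $\sum_{\{a,b\}\subseteq A} g(b-a)$ is minimized when $A$ is an arithmetic progression with common difference $1$, i.e., $m$ consecutive integers. I would prove this by an exchange/shifting argument. Suppose $A = \{a_1 < a_2 < \cdots < a_m\}$ is not a set of consecutive integers, so there is some index where $a_{\ell+1} - a_\ell \ge 2$ (a gap). The idea is to ``compress'' the set by shifting one of the two halves toward the other across this gap, reducing every cross-pair distance that straddles the gap while leaving all other pairwise gaps unchanged. Concretely, translating the top part $\{a_{\ell+1}, \dots, a_m\}$ downward by one preserves all internal differences within each of the two parts, strictly decreases each difference between a bottom element and a top element, and hence strictly decreases $S$ (using that $g$ is increasing). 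Iterating this compression terminates at a consecutive block, which must therefore be the minimizer.

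For the concrete computation of $g$, I would note that $g(\delta)=\sum_{p=0}^{T-1}\sum_{q=0}^{T-1}\bigl|\delta T + (p-q)\bigr|$, and since $\delta\ge 1$ forces $\delta T + (p-q)>0$ for all $p,q\in[0,T-1]$, the absolute value drops and $g(\delta) = \delta T^3 + \sum_{p,q}(p-q) = \delta T^3$, because the $(p-q)$ terms cancel by symmetry. Thus $g(\delta)=\delta T^3$ is linear and strictly increasing, which both justifies the exchange argument and lets me later evaluate $S$ exactly for the consecutive choice. The main obstacle I anticipate is being careful about edge effects in the shifting argument: one must verify that the compressed set still lies inside the allowed range $\left[0,\frac{k+1}{T}-1\right]$ (guaranteed by the hypothesis $k > \sqrt{T(6t-T+3)}$, which ensures $m$ consecutive blocks fit), and that every straddling pair distance genuinely decreases without accidentally collapsing two elements of $A$ to the same value. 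Once the linearity $g(\delta)=\delta T^3$ is in hand, the monotonicity is transparent and the exchange argument is routine, so the real content is setting up the block decomposition correctly and handling the boundary of the feasible region.
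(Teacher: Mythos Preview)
Your proposal is correct and uses essentially the same exchange/compression argument as the paper: if $A$ has a gap, shift the upper part down by one and observe that $S$ strictly decreases. The only cosmetic difference is that you first package the contribution of each block pair as $g(\delta)=\delta T^{3}$ and then invoke monotonicity of $g$, whereas the paper skips this reduction and directly computes the decrease from one shift as $2T\lvert B_{1}\rvert\cdot\lvert B_{2}\rvert$; both routes encode the same linear dependence on the block gap.
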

\begin{poc}
  
Suppose that some selection of \( A \) minimizes the value of \( S \), but \( A \) is not a sequence of consecutive integers. Let \( a \in A \) be the smallest element such that \( a \in A \) but \( a + 1 \notin A \). Now, define  
\[
A_{1} := \{ i : i \leq a, i \in A\} \quad \text{and} \quad A_{2} := \{i - 1 : i > a + 1, i \in A\}.
\]  
Furthermore, we set
\[B_{1}:=\bigcup_{i\in A_{1}}[iT,iT+T-1]\ \text{and}\ B_{2}:=\bigcup_{i\in A_{2}}[iT,iT+T-1].\] 
Then we have  
\[
\begin{aligned}
\sum_{\substack{i,j\in B \\ \floor{\frac{i}{T}} \neq \floor{\frac{j}{T}}}} d_{ij} &= \sum_{\substack{i,j\in B_1 \\ \floor{\frac{i}{T}} \neq \floor{\frac{j}{T}}}} d_{ij} + \sum_{\substack{i,j\in B_2 \\ \floor{\frac{i}{T}} \neq \floor{\frac{j}{T}}}} d_{ij} + 2\sum_{i \in B_{1}, j \in B_{2}} d_{ij} \\
&= \sum_{\substack{i,j\in B_{1} \\ \floor{\frac{i}{T}} \neq \floor{\frac{j}{T}}}} d_{ij} + \sum_{\substack{i,j\in B_2 \\ \floor{\frac{i}{T}} \neq \floor{\frac{j}{T}}}} d_{ij} + 2\sum_{i \in B_{1}, j \in B \setminus B_{1}} (d_{ij} - T) \\
&= \sum_{\substack{i,j\in B \\ \floor{\frac{i}{T}} \neq \floor{\frac{j}{T}}}} d_{ij} - 2T|B_{1}| \cdot |B_{2}|,
\end{aligned}
\]  
which contradicts the rule of selection of \( A \), as both \( B_{1} \) and \( B_{2} \) are non-empty. This completes the proof of claim.
\end{poc}

By the definition of \( d_{ij} = |i - j| \), we can assume without loss of generality that \( A = [0, m-1] \) and consequently \( B = [0, mT - 1] \). Then,  
\[
\begin{aligned}
    S &= \sum_{i \neq j \in B} |i - j| - \sum_{\substack{i \neq j \in B \\ \floor{\frac{i}{T}} = \floor{\frac{j}{T}}}} |i - j| \\
      &= \sum_{i \neq j \in B} |i - j| - \sum_{s=0}^{m-1} \sum_{\substack{i \neq j \in B \\ \floor{\frac{i}{T}} = \floor{\frac{j}{T}} = s}} |i - j| \\
      &= \sum_{i \neq j \in B} |i - j| - \sum_{s=0}^{m-1} \sum_{\substack{i \neq j \\ i, j \in [sT, sT + T - 1]}} |i - j| \\
      &= 2 \sum_{\ell=1}^{mT-1} (mT - \ell)\ell - 2 \sum_{s=0}^{m-1} \sum_{\ell=1}^{T-1} (T - \ell)\ell \\
      &= \frac{1}{3}(mT - 1)mT(mT + 1) - \frac{1}{3}m(T - 1)T(T + 1),
\end{aligned}
\]
where the second-to-last equality follows from counting the number of pairs \((i, j)\) with \( j > i \) and \( j - i = \ell \).  

Substituting this expression into inequality~\eqref{weight2}, we obtain  
\[
r \geq 4t - \frac{4}{3} \cdot \sqrt{T(6t - T + 3)} + 2.
\]  
This completes the proof.

\end{proof}

\section{Upper bound on redundancy via explicit constructions}\label{sec:construction upperbound}
The first purpose of this part is to prove~\cref{thm:constructionGary}, which improves the previous upper bound $r_{\wt}(k,t)\le \frac{4t-2}{1-2\sqrt{\ln{(2t)}/(2t)}}$ for Hamming weight function. Then based on the framework we build, we can also derive the explicit constructions of FCCs for Hamming weight distribution.

Our construction establishes a framework building on the Gray codes. The improvement then comes from a careful selection of several linear codes with desired properties. Moreover, the construction is explicit. Here we first formally introduce some basics.

\subsection{Basics}
The term \emph{Gray code} was introduced in 1980 to describe a method for generating combinatorial objects such that successive objects differ in a prescribed way. Gray codes have been extensively studied, and in this work, we focus specifically on the classical binary reflected Gray code. For a broader discussion on Gray codes, see \cite{mutze2022combinatorial,savage1997survey}.

\begin{defn}[Gray code]
An $n$-bit Gray code for binary numbers is an ordering of all $n$-bit numbers $\left\{\aaa_i \right\}_{i=0}^{2^n-1}$ so that successive numbers (including the first and last) differ in exactly one position. 
\end{defn}

\begin{exam}
The following ordering is a $4$-bit Gray code. 
$$0000, 0001, 0011, 0010, 0110, 0111, 0101, 0100, 1100, 1101, 1111, 1110, 1010, 1011, 1001, 1000.$$
\end{exam}

Assume $L_n$ is an $n$-bit Gray code, we denote $x\cdot L_n$ as the ordering consisting of an additional prefix $x$ before every element belonging to $L_n$, and denote $L_n^{-1}$ as the reverse ordering of $L_n$. For example, if $L=11,22,33,44$, then $0\cdot L=011,022,033,044$ and $L^{-1}=44,33,22,11$. Then $\left\{0\cdot L_n, 1\cdot L_n^{-1} \right\}$ is an $(n+1)$-bit Gray code, which is called the binary reflected Gray code. 

A \emph{binary \([n, k, d]\) linear code} is a \(k\)-dimensional subspace of \(\mathbb{F}_2^n\) with minimum Hamming distance \(d\) between any two distinct codewords. The parameters \(n\), \(k\), and \(d\) represent the code length, dimension, and error-correcting capability, respectively. A linear code can be completely described by a \emph{generator matrix} \(G\), where \(G\) is a \(k \times n\) matrix whose rows span the code subspace.

\subsection{General framework via Gray codes}
We describe our construction as follows. Suppose there exists a binary \([n, k, d]\) linear code \(\mathcal{C}\) with a generator matrix \(\boldsymbol{G}\), which is a \(k \times n\) binary matrix. By appropriately reordering the columns of \(\boldsymbol{G}\), we assume that its first \(k\) columns form an identity matrix. By definition of the generator matrix, the code \(\mathcal{C}\) can be written as:
\[
\mathcal{C} = \{\boldsymbol{a}\boldsymbol{G} : \boldsymbol{a} \in \mathbb{Z}_2^k\}.
\]

Next, we partition each codeword in \(\mathcal{C}\) into two parts: the first \(k\) coordinates form a vector \(\boldsymbol{a} \in \mathbb{Z}_2^k\), while the remaining \(n - k\) coordinates form another vector \(\boldsymbol{b} \in \mathbb{Z}_2^{n - k}\). Specifically, for each \(\boldsymbol{a} \in \mathbb{Z}_2^k\), we express:
\[
\boldsymbol{a}\boldsymbol{G} = (\boldsymbol{a}, \boldsymbol{b}),
\]
where the first \(k\) columns of \(\boldsymbol{G}\) constitute the identity matrix \(I_k\).

We then order all vectors in \(\mathbb{Z}_2^k\) according to a \(k\)-bit Gray code, denoted by \(\{\boldsymbol{a}_i\}_{i=0}^{2^k - 1}\). This induces a corresponding ordering of the codewords in \(\mathcal{C}\), given by \(\{\boldsymbol{c}_i = (\boldsymbol{a}_i, \boldsymbol{b}_i)\}_{i=0}^{2^k - 1}\). We provide a simple example as follows to illustrate the above definitions.

\begin{exam}
    Consider a \(3\)-bit Gray code:
    \[
    000, 001, 011, 010, 110, 111, 101, 100.
    \]
    Let \(\mathcal{C}\) be a \([6, 3, 3]\) linear code with generator matrix:
    \[
    \boldsymbol{G} = \begin{pmatrix}
    1 & 0 & 0 & 0 & 1 & 1 \\
    0 & 1 & 0 & 1 & 0 & 1 \\
    0 & 0 & 1 & 1 & 1 & 0 \\
    \end{pmatrix}.
    \]
    By combining the above \(3\)-bit Gray code with the \([6, 3, 3]\) linear code \(\mathcal{C}\), the operation
    \[
    \{\boldsymbol{a}\boldsymbol{G}\}_{\boldsymbol{a} \in \mathbb{Z}_2^3} = \{(\boldsymbol{a}, \boldsymbol{b})\}_{(\boldsymbol{a}, \boldsymbol{b}) \in \mathcal{C}}
    \]
    results in the following ordering of the codewords in \(\mathcal{C}\):
    \begin{align*}
        &(0, 0, 0, 0, 0, 0), \\
        &(0, 0, 1, 1, 1, 0), \\
        &(0, 1, 1, 0, 1, 1), \\
        &(0, 1, 0, 1, 0, 1), \\
        &(1, 1, 0, 1, 1, 0), \\
        &(1, 1, 1, 0, 0, 0), \\
        &(1, 0, 1, 1, 0, 1), \\
        &(1, 0, 0, 0, 1, 1).
    \end{align*}
\end{exam}

We now formally describe our construction. Generally, let \(\mathcal{C}\) be a binary \([n, \lceil \log(2t+1) \rceil, 2t+1]\) linear code of size \(M := 2^{\lceil \log(2t+1) \rceil} \geq 2t+1\). The existence of such a linear code $\mathcal{C}$ will be discussed in~\cref{selection}. Using the \(\lceil \log(2t+1) \rceil\)-bit Gray code and the operation defined above, we can construct an ordering of the codewords \(\{\boldsymbol{c}_i = (\boldsymbol{a}_i, \boldsymbol{b}_i)\}_{i=0}^{M-1}\), such that the Hamming distance between \(\boldsymbol{a}_i\) and \(\boldsymbol{a}_{i+1}\) is exactly one for any \(0 \leq i \leq M-1\), where \(\boldsymbol{a}_M = \boldsymbol{a}_0\).

 For each \(\boldsymbol{u} \in \mathbb{Z}_2^k\), we define the encoding function as 
\[
\enc_{\textup{wt}}(\boldsymbol{u}) = (\boldsymbol{u}, \boldsymbol{p}_{\textup{wt}(\boldsymbol{u})}),
\]
where \(\boldsymbol{p}_i\) is given by:
\[
\boldsymbol{p}_i = 
\begin{cases} 
\boldsymbol{b}_i & \text{for } 0 \leq i \leq M-1, \\
\boldsymbol{b}_{i \bmod M} & \text{for } i \geq M.
\end{cases}
\]

Next, we need to show that our construction satisfies the properties of FCCs listed in~\cref{lem:weight functionalcoding}.
\begin{claim}
For any \(i \neq j\in [0,k] \), 
\[
d(\boldsymbol{p}_i, \boldsymbol{p}_j) + d(\uu_{i},\uu_{j}) \geq 2t+1.
\]
\end{claim}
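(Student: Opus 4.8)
The plan is to reduce the claim to two ingredients: the minimum distance $2t+1$ of the linear code $\mathcal{C}$, and the cyclic structure of the Gray code used to order the $\boldsymbol{a}_i$. First I would record the two elementary facts that drive everything. Since $\uu_i = (0^{k-i}1^i)$ and $\uu_j = (0^{k-j}1^j)$ differ in exactly $|i-j|$ positions, we have $d(\uu_i,\uu_j) = |i-j|$; and by construction $\boldsymbol{p}_i = \boldsymbol{b}_{i \bmod M}$. Writing $i' := i \bmod M$ and $j' := j \bmod M$, the claim is therefore equivalent to
\[
d(\boldsymbol{b}_{i'}, \boldsymbol{b}_{j'}) + |i-j| \geq 2t+1 .
\]

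I would then split into two cases according to whether $i' = j'$. If $i' = j'$, then since $i \neq j$ the difference $|i-j|$ is a positive multiple of $M$, so $|i-j| \geq M = 2^{\lceil \log(2t+1)\rceil} \geq 2t+1$, and the inequality holds trivially because $d(\boldsymbol{b}_{i'}, \boldsymbol{b}_{j'}) = 0$.

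The substantive case is $i' \neq j'$. Here $\boldsymbol{c}_{i'} = (\boldsymbol{a}_{i'}, \boldsymbol{b}_{i'})$ and $\boldsymbol{c}_{j'} = (\boldsymbol{a}_{j'}, \boldsymbol{b}_{j'})$ are distinct codewords of $\mathcal{C}$ (distinct because $\boldsymbol{a}_{i'} \neq \boldsymbol{a}_{j'}$), so the minimum-distance hypothesis gives $d(\boldsymbol{a}_{i'}, \boldsymbol{a}_{j'}) + d(\boldsymbol{b}_{i'}, \boldsymbol{b}_{j'}) = d(\boldsymbol{c}_{i'}, \boldsymbol{c}_{j'}) \geq 2t+1$, i.e. $d(\boldsymbol{b}_{i'}, \boldsymbol{b}_{j'}) \geq 2t+1 - d(\boldsymbol{a}_{i'}, \boldsymbol{a}_{j'})$. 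Substituting this into the target inequality, it suffices to prove $d(\boldsymbol{a}_{i'}, \boldsymbol{a}_{j'}) \leq |i-j|$; then the deficit $2t+1 - d(\boldsymbol{a}_{i'}, \boldsymbol{a}_{j'})$ is restored by the $+|i-j|$ term and the bound follows.

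The hard part — and really the only place the Gray code enters — is this last estimate. Because $\{\boldsymbol{a}_i\}_{i=0}^{M-1}$ is a cyclic Gray code, consecutive entries (including the wrap-around pair $\boldsymbol{a}_{M-1}, \boldsymbol{a}_0$) differ in exactly one coordinate, so by the triangle inequality along the shorter arc joining indices $i'$ and $j'$ on the cycle we get $d(\boldsymbol{a}_{i'}, \boldsymbol{a}_{j'}) \leq \min(|i'-j'|, M - |i'-j'|)$, the cyclic distance of $i'$ and $j'$ modulo $M$. I would then observe that this cyclic distance equals the cyclic distance of $i$ and $j$ modulo $M$, which is at most the ordinary distance $|i-j|$ (take the representative $\kappa = 0$ in $\min_{\kappa} |i-j-\kappa M|$). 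Chaining these inequalities yields $d(\boldsymbol{a}_{i'}, \boldsymbol{a}_{j'}) \leq |i-j|$ and closes the argument. I expect the only delicate point to be making this cyclic-distance bookkeeping precise, since it must hold uniformly whether or not $i$ and $j$ lie in the same length-$M$ block.
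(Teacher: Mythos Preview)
Your proposal is correct and follows essentially the same approach as the paper: both reduce to $d(\boldsymbol{b}_{i'},\boldsymbol{b}_{j'})\geq 2t+1-d(\boldsymbol{a}_{i'},\boldsymbol{a}_{j'})$ from the minimum distance of $\mathcal{C}$, and then bound $d(\boldsymbol{a}_{i'},\boldsymbol{a}_{j'})\leq |i-j|$ via the triangle inequality along consecutive Gray-code steps. The only cosmetic difference is that the paper disposes of the case $i'=j'$ implicitly by first assuming $|i-j|\leq 2t$ (which forces $i'\neq j'$ since $M\geq 2t+1$) and then walks the path of exactly $|i-j|$ steps, whereas you handle $i'=j'$ explicitly and invoke the shorter cyclic arc; both routes are valid and equally short.
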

\begin{poc}
    Note that we can assume \(d(\uu_{i},\uu_{j})=|i - j| \leq 2t\), since the inequality holds trivially when \(|i - j| \geq 2t + 1\). Due to the symmetry of \(i\) and \(j\), we can further assume without loss of generality that \(1 \leq i - j \leq 2t\). Then, by definition, we have:
    \begin{align*}
        d(\pp_i,\pp_j)&= d(\bb_{i\bmod M},\bb_{j\bmod M}) \\
        &= d(\cc_{i\bmod M},\cc_{j\bmod M}) - d(\aaa_{i\bmod M},\aaa_{j\bmod M}) \\
        &\geq 2t+1 - \sum _{s=j}^{i-1}d(\aaa_{s\bmod M},\aaa_{s+1\bmod M}) \\
        &= 2t+1-|i-j|.
    \end{align*}
    Here, the inequality follows from the fact that $\cc_{i\bmod M}\neq \cc_{j\bmod M}$ and from the triangle inequality under Hamming metric. 
\end{poc}

\subsection{Selection of desired linear codes: Proof of~\cref{thm:constructionGary}}\label{selection}
Based on the above framework, for a given \(k = \lceil \log(2t+1) \rceil\) and \(d = 2t+1\), achieving a better bound on the optimal redundancy requires selecting a binary \([n, k, d]\) linear code with \(n\) as small as possible.  
A fundamental limitation on binary \([n, k, d]\) linear codes is given by the following bound, discovered in~\cite{griesmer1960bound}.
\begin{lemma}[\cite{griesmer1960bound}]\label{lemma:Griesmer}
    For any binary \([n, k, d]\) linear code, we have
    \[
        n \geq \sum_{i=0}^{k-1} \left\lceil \frac{d}{2^i} \right\rceil.
    \]
\end{lemma}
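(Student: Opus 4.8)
The final statement to prove is the Griesmer bound (Lemma~\ref{lemma:Griesmer}): for any binary $[n,k,d]$ linear code, $n \geq \sum_{i=0}^{k-1}\lceil d/2^i\rceil$. My plan is to prove this by induction on the dimension $k$, using the classical residual code construction, which reduces a $[n,k,d]$ code to a shorter code of one smaller dimension whose minimum distance is at least $\lceil d/2\rceil$.

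\textbf{Setup and base case.} First I would handle $k=1$: a one-dimensional code is spanned by a single nonzero codeword of weight $d$, so its length satisfies $n\geq d = \lceil d/2^0\rceil$, matching the bound. For the inductive step, suppose the bound holds for all codes of dimension $k-1$. Let $\mathcal{C}$ be an $[n,k,d]$ binary linear code. The key is to pick a codeword $\cc\in\mathcal{C}$ of minimum weight $d$; after permuting coordinates (which preserves all parameters), I may assume $\cc = (1^d, 0^{n-d})$, i.e.\ its support is the first $d$ coordinates.

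\textbf{The residual code.} I would then define the \emph{residual code} $\mathcal{C}'$ obtained by puncturing every codeword of $\mathcal{C}$ on the first $d$ coordinates, i.e.\ taking the projection $\pi$ of $\mathcal{C}$ onto the last $n-d$ coordinates. This residual code has length $n-d$ and dimension $k-1$. The dimension claim requires checking that the kernel of $\pi$ restricted to $\mathcal{C}$ is exactly the one-dimensional span of $\cc$: a codeword vanishing on the last $n-d$ coordinates is supported inside the first $d$, hence (being of weight $\leq d$ and at least $d$ if nonzero) must equal $\cc$ or $\zero$. The central estimate is the lower bound on the minimum distance of $\mathcal{C}'$: for any nonzero $\xx\in\mathcal{C}$ not a multiple of $\cc$, I would compare $\wt(\xx)$ and $\wt(\xx+\cc)$. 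On the first $d$ coordinates, if $\xx$ has $a$ ones there, then $\xx+\cc$ has $d-a$ ones there, while on the remaining coordinates $\xx$ and $\xx+\cc$ agree; writing $w$ for the weight of $\pi(\xx)$ (the weight on the last $n-d$ coordinates), we get $\wt(\xx)=a+w$ and $\wt(\xx+\cc)=(d-a)+w$. Since both are codewords of weight at least $d$, summing gives $2w + d = \wt(\xx)+\wt(\xx+\cc)\geq 2d$, hence $w\geq d/2$, so $w\geq\lceil d/2\rceil$. This shows $\mathcal{C}'$ has minimum distance at least $\lceil d/2\rceil$.

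\textbf{Conclusion.} Applying the induction hypothesis to $\mathcal{C}'$, which is an $[n-d,\,k-1,\,\geq\lceil d/2\rceil]$ code, yields
\[
n-d \;\geq\; \sum_{i=0}^{k-2}\left\lceil \frac{\lceil d/2\rceil}{2^i}\right\rceil \;=\; \sum_{i=0}^{k-2}\left\lceil \frac{d}{2^{i+1}}\right\rceil \;=\; \sum_{i=1}^{k-1}\left\lceil \frac{d}{2^i}\right\rceil,
\]
where the middle equality uses the identity $\lceil \lceil x\rceil / 2^i\rceil = \lceil x/2^i\rceil$ for nested ceilings. Adding $d=\lceil d/2^0\rceil$ to both sides gives the desired bound. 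The main obstacle is verifying the minimum-distance bound for the residual code cleanly — in particular the averaging argument $\wt(\xx)+\wt(\xx+\cc)\geq 2d$ together with the nested-ceiling identity $\lceil\lceil d/2\rceil/2^i\rceil=\lceil d/2^{i+1}\rceil$; everything else is routine bookkeeping on code lengths and dimensions.
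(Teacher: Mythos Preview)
The paper does not supply its own proof of this lemma; it is quoted as a known result with a citation to Griesmer's original paper, so there is nothing to compare against. Your argument via the residual code construction is the standard proof and is correct as written, including the dimension count for the punctured code, the averaging estimate $\wt(\xx)+\wt(\xx+\cc)\geq 2d$ giving minimum distance $\geq\lceil d/2\rceil$, and the nested-ceiling identity used in the induction step.
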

Here, we present several binary linear codes that attain the bound in~\cref{lemma:Griesmer}, and provide better upper bounds on the optimal redundancy under various parameter conditions for the Hamming weight function. The chosen linear codes here include simplex codes and Belov-type codes~\cite{macwilliams1977theorypart1}.

\begin{enumerate}
    \item[\textup{(1)}] \textbf{Simplex codes:}
For each positive integer \(m\), let \(t = 2^{m-1} - 1\). It is known that there exists a simplex code, which is a binary \([2^m - 1, m, 2^{m-1}]\) linear code, with a generator matrix \(\boldsymbol{G}_m\). The \(2^m - 1\) columns of \(\boldsymbol{G}_m\) consist of all nonzero vectors in \(\mathbb{Z}_2^m\). For example,  
\[
\boldsymbol{G}_3 = 
\begin{pmatrix}
    0 & 0 & 0 & 1 & 1 & 1 & 1 \\
    0 & 1 & 1 & 0 & 0 & 1 & 1 \\
    1 & 0 & 1 & 0 & 1 & 0 & 1
\end{pmatrix}.
\]  
The simplex code is, in fact, the dual of the well-known Hamming code. Furthermore, we can construct a new code by modifying \(\boldsymbol{G}_m\). Specifically, repeat \(\boldsymbol{G}_m\) once to form \((\boldsymbol{G}_m, \boldsymbol{G}_m)\), and then remove the first column to obtain a new matrix \(\boldsymbol{G}_m'\). Let \(\mathcal{C}'\) be the linear code generated by \(\boldsymbol{G}_m'\). It is straightforward to verify that \(\mathcal{C}'\) is a binary \([2^{m+1} - 3, m, 2^m - 1]\) linear code, which achieves the Griesmer bound.  

By simple calculations, the redundancy of the weight FCC is given by  
\[
r = 2^{m+1} - 3 - m = 4t - \lceil \log t \rceil.
\]
    
    \item[\textup{(2)}]\textbf{Belov-type codes:} 
Note that the above construction requires \(t+1 = 2^{m-1}\). Here, we present a more general construction that provides a good upper bound on the optimal redundancy (slightly worse than \(4t - \lceil \log t \rceil\)) while allowing greater flexibility for the parameter \(t\). Let \(m = \lceil \log(2t+1) \rceil\). There exist a unique integer \(1 \leq p \leq m-1\) and unique integers \(m > u_1 > \cdots > u_p = 1\) such that  
\[
\sum_{i=1}^{p} 2^{u_i} = 2^{m+1} - 4t - 2.
\]  

When \(\sum_{i=1}^{\min\{3, p\}} u_i \leq 2m\), there exists a \([4t + p, m, 2t + 1]\) linear code, referred to as a Belov-type code, which also achieves the Griesmer bound. Especially, when $u_1\le \frac{2}{3}m+1$, condition \(\sum_{i=1}^{\min\{3, p\}} u_i \leq 2m\) must be satisfied. That is to say, when $2^{\ceil{\log(2t+1)}}-(2t+1)\le 2^{\frac{2}{3}\ceil{\log(2t+1)}}$, \([4t + p, m, 2t + 1]\) linear code must exist. By simple calculations, the redundancy of the weight FCC is given by  
\[
4t + p - \lceil \log(2t + 1) \rceil,
\]  
where \(p\) can be interpreted as the number of ones in the binary representation of \(2^{\lceil \log(2t + 1) \rceil} - (2t+1)\). 
\end{enumerate}

We believe that there are more linear codes that can be selected as candidates here, which might be helpful to further obtain better upper bound on $r_{\wt}(k,t)$. However, since we have already presented our construction intent in a relatively complete manner, we will not pile up more examples. For the selection of more nice binary linear codes, we suggest referring to the content in these references~\cite{brouwer1998bounds,hill1999survey,Huffman2003correcting}.

\subsection{A general framework for Hamming weight distribution: Proof of~\cref{thm:constructionweightdistribution}}

The main purpose of this part is to improve the upper bound for Hamming weight distribution function via explicit constructions.

Let \(f(\uu)=\Delta_T(\uu) = \left\lfloor \frac{\textup{wt}(\uu)}{T} \right\rfloor\) for any \(T \leq 2t\). To simplify notation, let \(E := \left\lfloor \frac{k}{T} \right\rfloor + 1 = |\textup{Im}(\Delta_T(\mathbb{Z}_2^k))|\). The framework can be described as follows:

\begin{theorem}\label{thm:constructionweightdistribution2}
Let \(k\) and \(t\) be positive integers. Let \(\pp_i = (0^{T-i-1}1^i) \in \mathbb{Z}_2^{T-1}\) for \(0 \leq i \leq T-1\), and \(\pp_i = \pp_{i \bmod T}\) for \(T\le i\le E-1\). Suppose there exists an integer \(s\) and a sequence of vectors \(\{\qq_i\}_{i=0}^{E-1} \subseteq \mathbb{Z}_2^s\) such that  
\[
d(\qq_i, \qq_j) \geq 2t + 1 - T \cdot |i - j| \quad \text{for } 0 \leq i \neq j \leq E-1.
\]  
Then, for any \(\uu \in \mathbb{Z}_2^k\), the encoding function 
\[
\enc_{\Delta_T}(\uu) = (\uu, \qq_{\Delta_T(\uu)}, \pp_{\textup{wt}(\uu)})
\]  
is a valid encoding function of function-correcting code for the weight distribution function with redundancy $s+T-1$.
\end{theorem}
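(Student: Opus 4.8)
The plan is to verify the FCC distance condition from~\cref{lem:weightdistribution functionalcoding} directly: it suffices to check that for all $0 \le i \ne j \le k$ with $\Delta_T(\uu_i) \ne \Delta_T(\uu_j)$ (equivalently $\floor{i/T} \ne \floor{j/T}$), the encoded vectors $\enc_{\Delta_T}(\uu_i) = (\uu_i, \qq_{\floor{i/T}}, \pp_i)$ and $\enc_{\Delta_T}(\uu_j) = (\uu_j, \qq_{\floor{j/T}}, \pp_j)$ are at Hamming distance at least $2t+1$. Since the three blocks occupy disjoint coordinates, the total distance splits as $d(\uu_i,\uu_j) + d(\qq_{\floor{i/T}},\qq_{\floor{j/T}}) + d(\pp_i,\pp_j)$, so I would bound each summand and add. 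Write $a := \floor{i/T}$ and $b := \floor{j/T}$; by assumption $a \ne b$, and WLOG $i > j$ so $a \ge b$.

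The first block contributes $d(\uu_i, \uu_j) = |i-j| = i - j$. The $\pp$-block is designed to recover the weight \emph{within} a residue class: since $\pp_i = (0^{T-1-(i \bmod T)} 1^{i \bmod T})$ has weight $i \bmod T$, we get $d(\pp_i, \pp_j) = |(i \bmod T) - (j \bmod T)|$ by the nested structure of these vectors. The key quantitative step is to combine the first two blocks: I claim $d(\uu_i,\uu_j) + d(\pp_i,\pp_j) \ge T(a - b)$. Indeed, writing $i = aT + (i \bmod T)$ and $j = bT + (j \bmod T)$, one has $i - j = T(a-b) + \big((i\bmod T)-(j\bmod T)\big)$, so $d(\uu_i,\uu_j) = T(a-b) + (i\bmod T) - (j\bmod T)$, and adding $d(\pp_i,\pp_j) = |(i\bmod T)-(j\bmod T)|$ makes the residue terms contribute nonnegatively, yielding at least $T(a-b) = T|a-b|$. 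Then the $\qq$-block hypothesis gives $d(\qq_a,\qq_b) \ge 2t+1 - T|a-b|$, and summing all three blocks produces
\[
d(\enc_{\Delta_T}(\uu_i), \enc_{\Delta_T}(\uu_j)) \ge T|a-b| + \big(2t+1 - T|a-b|\big) = 2t+1,
\]
as required. The redundancy is then just the combined length $s + (T-1)$ of the appended blocks $\qq$ and $\pp$.

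The main subtlety to handle carefully is the case split hidden in the $\qq$-hypothesis: the bound $d(\qq_a,\qq_b) \ge 2t+1 - T|a-b|$ is only a genuine constraint when $2t+1 - T|a-b| > 0$; when $T|a-b| \ge 2t+1$ the right-hand side is nonpositive and the $\qq$-distance could be small, but then the first-block contribution $T|a-b| \ge 2t+1$ already suffices on its own, so the total is again at least $2t+1$. This is the only place where the argument branches, and it is the step I would write out explicitly to avoid a gap. Everything else is the routine disjoint-coordinates additivity of Hamming distance together with the elementary residue computation above, so I do not expect any real obstacle — the content of the theorem lies entirely in reducing the weight-distribution problem to the existence of the short auxiliary sequence $\{\qq_i\}$, which the subsequent constructions then supply.
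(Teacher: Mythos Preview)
Your argument is correct and follows essentially the same three-block decomposition as the paper's proof; the paper works directly with arbitrary $\uu,\vv$ and uses $d(\uu,\vv)\ge |\wt(\uu)-\wt(\vv)|$, while you first invoke \cref{lem:weightdistribution functionalcoding} to reduce to the special vectors $\uu_i$, which lets you replace the paper's two-case split on the sign of $(i\bmod T)-(j\bmod T)$ by the single clean inequality $(i-j)+|(i\bmod T)-(j\bmod T)|\ge T|a-b|$. One small wording fix: in your branch $T|a-b|\ge 2t+1$, it is not the \emph{first block alone} that gives $T|a-b|$ (indeed $i-j$ can be strictly smaller than $T(a-b)$ when $i\bmod T < j\bmod T$); what suffices is the combined bound $d(\uu_i,\uu_j)+d(\pp_i,\pp_j)\ge T|a-b|$ you had already established, together with $d(\qq_a,\qq_b)\ge 0$.
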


\begin{proof}[Proof of~\cref{thm:constructionweightdistribution2}]
Let \(\uu, \vv \in \mathbb{Z}_2^k\) with \(\Delta_T(\uu) \neq \Delta_T(\vv)\). By the definition of FCCs, it suffices to prove that \(d(\enc(\uu), \enc(\vv)) \geq 2t + 1\). First, suppose \(|\wt(\vv) - \wt(\uu)| \geq 2t + 1\). In this case, we are immediately done since \(d(\enc(\uu), \enc(\vv)) \geq d(\uu, \vv) \geq |\wt(\vv) - \wt(\uu)| \geq 2t + 1\).  

Now, assume \(|\wt(\vv) - \wt(\uu)| < 2t + 1\). Without loss of generality, let  
\[
\wt(\uu) = mT + i \quad \text{and} \quad \wt(\vv) = (m + n)T + j,
\]  
for some integers \(m \geq 0\), \(n \ge 1\), and \(i, j \in [0, T-1]\).  

\begin{enumerate}
\item If \(i \geq j\), we then have  
\[
\begin{aligned}
d(\enc(\uu), \enc(\vv)) &= d(\uu, \vv) + d(\qq_m, \qq_{m+n}) + d(\pp_i, \pp_j) \\
&\geq \wt(\vv) - \wt(\uu) + \big(2t + 1 - T \cdot |m - (m + n)|\big) + (i - j) \\
&= nT + j - i + 2t + 1 - nT + i - j \\
&= 2t + 1,
\end{aligned}
\]  
where the inequality follows from \(d(\uu, \vv) \geq |\wt(\vv) - \wt(\uu)|\) and the distance properties of \(\qq_m, \qq_{m+n}\).  

\item If $i<j$, then we have  
\[
\begin{aligned}
d(\enc(\uu), \enc(\vv)) &= d(\uu, \vv) + d(\qq_m, \qq_{m+n}) + d(\pp_i, \pp_j) \\
&\geq \wt(\vv) - \wt(\uu) + \big(2t + 1 - T \cdot |m - (m + n)|\big) + (j - i) \\
&= nT + j - i + 2t + 1 - nT + j - i \\
&= 2t + 1 + 2(j - i) \\
&> 2t + 1,
\end{aligned}
\]  
where the first inequality follows from \(d(\uu, \vv) \geq |\wt(\vv) - \wt(\uu)|\).  
\end{enumerate}

Thus, in both cases, we have \(d(\enc(\uu), \enc(\vv)) \geq 2t + 1\). Furthermore, the redundancy of our encoding function is equal to the sum of the lengths of \(\{\qq_i\}_{i=0}^{E-1}\) and \(\{\pp_i\}_{i=0}^{T-1}\). This completes the proof.

\end{proof}
We then show the FCCs by selecting certain sequences of vectors $\{\qq_{i}\}_{i=0}^{E-1}$.
\begin{enumerate}
    \item[\textup{(1)}] 

When \(T \geq t + 1\), we can define the redundancy vectors \(\qq_i\) as follows:
\[
\qq_i = 
\begin{cases} 
0^{2t-T+1}, & \text{if \(i\) is odd}; \\ 
1^{2t-T+1}, & \text{if \(i\) is even}.
\end{cases}
\]
By~\cref{thm:constructionweightdistribution}, the redundancy of our encoding function is  
\[
(2t - T + 1) + (T - 1) = 2t.
\]
Furthermore, this explicit construction is optimal. To see this, observe that \(|\textup{Im}(\Delta_T)| \geq 2\). By~\cref{cor1}, the redundancy satisfies \(r_{\Delta_T}(k, t) \geq 2t\), confirming the optimality of our construction.

    \item[\textup{(2)}] 

For the general case \(T \leq t\), let \(z := \ceil{\frac{2t+1}{T}}\). Suppose that \(2^{\lceil \log z \rceil} - z \leq 2^{\frac{2}{3}\lceil \log z \rceil}\). Then, by~\cref{thm:constructionGary}, there exists a family of vectors \(\{\qq'_i\}_{i=0}^{E-1}\), referred to as redundancy vectors, of length 
\[
\frac{4t + 2}{T} - 2 + p - \lceil \log z \rceil,
\]
where \(p\) denotes the number of ones in the binary representation of \(2^{\lceil \log z \rceil} - z\). These vectors satisfy the property \(d(\qq'_i, \qq'_j) \geq z - |i - j|\) for all \(i, j\).
    
Based on the selection of \(\{\qq_{i}'\}_{i=0}^{E-1}\), we construct a new family \(\{\qq_{i}\}_{i=0}^{E-1}\) by setting \(\qq_{i}\) to be the \(T\)-fold repetition of \(\qq_{i}'\). Specifically, \(\qq_{i} = (\qq_{i}', \qq_{i}', \ldots, \qq_{i}')\) with \(T\) repetitions. This results in a new construction of redundancy vectors of length 
\[
\left(\frac{4t + 2}{T} - 2 + p - \lceil \log z \rceil\right)\cdot T+T-1=4t + (p - 1)T - \lceil \log z \rceil\cdot T + 1,
\]
 where \(p\) represents the number of ones in the binary representation of \(2^{\lceil \log z \rceil} - z\).
    
Specifically, when \(\frac{2t+1}{T} = 2^m - 1\) for some integer \(m\), the redundancy of our construction simplifies to \(4t - mT + 1 = 4t - \lceil \log \frac{2t+1}{T} \rceil \cdot T + 1\). Note that \(2^{\lceil \log z \rceil} - z\le 2^{\frac{2}{3}\lceil \log z \rceil} \leq 2^{\lceil \log z \rceil - 1}\), which implies \(p \leq \lceil \log z \rceil\). Consequently, when $t$ is sufficiently large and $T=o(t)$, the redundancy of this construction becomes  
\[
4t + (p - 1)T - \lceil \log z \rceil\cdot T + 1 \le 4t-T = 4t - o(t),
\]  
which is asymptotically optimal according to Theorem~\ref{thm:NewDistribution}.

\end{enumerate}

\section{Concluding remarks}\label{sec:Concluding}
In this paper, we study the optimal redundancy of function-correcting codes (FCCs) for two important function classes: the Hamming weight function and the Hamming weight distribution function. We establish near-optimal lower bounds and construct explicit codes that improve upon previously known upper bounds.  

Our lower bounds are derived using a combination of double counting arguments and structural analysis. This approach also extends to function-correcting codes in symbol-pair channels~\cite{xia2024function}, which arise in high-density data storage where read operations return pairs of consecutive symbols rather than individual ones~\cite{Cassuto2011symbolpair,Ding2018MDSsymbol,Elishco2020boundsymbolpair,li2017MDSsymbol}. Unlike the Hamming distance, which accounts for single-symbol errors, the symbol-pair distance captures errors in consecutive pairs, providing a more robust framework for error detection and correction. Formally, for two codewords \( \boldsymbol{x} = (x_1, x_2, \dots, x_n) \) and \( \boldsymbol{y} = (y_1, y_2, \dots, y_n) \) of length \( n \), their symbol-pair distance \( d_p(\boldsymbol{x}, \boldsymbol{y}) \) is defined as the number of positions \( i \in \{1, 2, \dots, n\} \) where the consecutive symbol pairs \( (x_i, x_{i+1}) \) and \( (y_i, y_{i+1}) \) differ, with indices taken modulo \( n \) to account for cyclicity. That is,
\[
d_p(\boldsymbol{x}, \boldsymbol{y}) = \left| \{ i \mid (x_i, x_{i+1}) \neq (y_i, y_{i+1}) \} \right|.
\]

Motivated by these challenges, Xia, Liu, and Chen~\cite{xia2024function} introduced function-correcting symbol-pair codes (FCSPCs) to minimize redundancy while improving information storage efficiency. Similar to FCCs, FCSPCs ensure that key attributes of the transmitted message can be accurately recovered despite errors. A central open problem in this setting is determining the optimal redundancy required for such codes.  

\begin{defn}
Let \( k \), \( r \), and \( t \) be positive integers. An encoding function \( \enc: \Z_2^k \to \Z_2^{k+r} \), defined by  
\[
\enc(\uu) = (\uu, \pp(\uu))
\]  
for \( \uu \in \Z_2^k \), is said to define a function-correcting symbol-pair code (FCSPC) for a function \( f: \Z_2^k \to \im(f) \) if, for all \( \uu_1, \uu_2 \in \Z_2^k \) with \( f(\uu_1) \neq f(\uu_2) \), the pairwise distance satisfies  
\[
d_p(\enc(\uu_1), \enc(\uu_2)) \geq 2t + 1.
\]  
The optimal redundancy \( r_p^f(k, t) \) is the smallest integer \( r \) for which there exists an encoding function \( \enc: \Z_2^k \to \Z_2^{k+r} \) satisfying this condition for \( f \).  
\end{defn}  

In~\cite{xia2024function}, the following upper and lower bounds on the optimal redundancy were established:  

\begin{lemma}[\cite{xia2024function}]
    For any integers \(k\) and \(t \geq 6\) with \(t < k \leq (2t-1)^2\),  
    \[
    \frac{20t^3 - 20t}{9(t+1)^2} \leq r_{p}^{\wt_p}(k,t) \leq \frac{4t-4}{1 - 2\sqrt{\frac{\ln(2t-1)}{2t-1}}}.
    \]
\end{lemma}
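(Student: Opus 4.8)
The plan is to transport the entire Hamming-metric argument for \(r_{\wt}(k,t)\) into the symbol-pair metric \(d_p\), dealing with the new boundary effects as they arise. First I would prove the symbol-pair analogue of~\cref{lem:weight functionalcoding}: it suffices to assign redundancy vectors \(\pp_0,\dots,\pp_k\) to the canonical messages \(\uu_i=(0^{k-i}1^i)\) and to control \(d_p(\enc(\uu_i),\enc(\uu_j))\). The decisive observation is that the support of \(\uu_i-\uu_j\) is a single contiguous run of length \(|i-j|\), so inside the message block it activates exactly \(|i-j|+1\) consecutive symbol pairs, i.e.\ \(d_p(\uu_i,\uu_j)=|i-j|+1\). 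This single-burst difference in fact minimizes \(d_p\) among all message pairs with a fixed weight gap, since any distinct \(\uu,\vv\) satisfy \(d_p(\uu,\vv)\ge d(\uu,\vv)+1\ge |\wt(\uu)-\wt(\vv)|+1\) (a nonempty difference contains at least one burst); this makes the canonical vectors the genuine worst case and legitimizes the reduction in both directions. Combining \(d_p(\enc(\uu_i),\enc(\uu_j))\ge 2t+1\) with the two cyclic pairs straddling the \(\uu\)/\(\pp\) junction, the resulting constraint on the redundancy side is \([\DD]_{ij}=\max\{2t-1-|i-j|,\,0\}\).

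For the lower bound I would establish a symbol-pair Plotkin inequality mirroring~\cref{Plotkin Bound}: at each of the \(r\) cyclic pair-positions the four values \(00,01,10,11\) make at most a \(\tfrac34\)-fraction of ordered pairs differ (maximized when balanced), so for any \(M\) redundancy vectors of length \(r\) one has \(\sum_{i<j}d_p(\pp_i,\pp_j)\le \tfrac38 M^2 r\). I then restrict to the window of \(M=t+1\) consecutive weights \(\{0,1,\dots,t\}\), for which every pair contributes its full requirement \(2t-1-|i-j|\), and double count. Evaluating \(\sum_{i<j}[\DD]_{ij}=\sum_{\ell=1}^{t}(t+1-\ell)(2t-1-\ell)=\tfrac{5}{6}\,t(t+1)(t-1)\) and feeding it into \(\tfrac38(t+1)^2 r\ge \sum_{i<j}[\DD]_{ij}\) yields exactly \(r\ge \tfrac{20t(t-1)}{9(t+1)}=\tfrac{20t^3-20t}{9(t+1)^2}\).

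For the upper bound I would reuse the Hadamard-based Gilbert--Varshamov construction of~\cref{GV hadamard}, but with effective Hamming distance \(D=2t-1\). Assign \(\pp_{\wt(\uu)}\) to each message, where \(\{\pp_i\}_{i=0}^{k}\) is a code of \(M=k+1\) words of minimum Hamming distance \(2t-1\). Since the message and redundancy blocks occupy disjoint pair-positions, whenever the weights differ,
\[
d_p(\enc(\uu),\enc(\vv))\ge d_p^{\mathrm{int}}(\uu,\vv)+d(\pp_{\wt(\uu)},\pp_{\wt(\vv)})\ge \big(|\wt(\uu)-\wt(\vv)|+1\big)+(2t-1)\ge 2t+1,
\]
the extra \(+1\) being supplied by the burst identity \(d_p^{\mathrm{int}}\ge d+1\). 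Invoking~\cref{GV hadamard} with \(D=2t-1\ge 10\) (hence \(t\ge 6\)) and \(M=k+1\le(2t-1)^2\) bounds the length of such a code by \(\tfrac{2D-2}{1-2\sqrt{\ln D/D}}=\tfrac{4t-4}{1-2\sqrt{\ln(2t-1)/(2t-1)}}\), which is precisely the claimed upper bound and fixes the stated range \(t<k\le(2t-1)^2\).

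The main obstacle is exactly the feature absent from the Hamming case: symbol-pair distance is cyclic and does not split additively across the \(\uu\)/\(\pp\) boundary, so \(d_p(\enc(\uu_i),\enc(\uu_j))\) differs from \(d_p(\uu_i,\uu_j)+d_p(\pp_i,\pp_j)\) by the two straddling pairs. Tracking these \(O(1)\) corrections is what fixes the additive constant \(-1\) inside \(\max\{2t-1-|i-j|,0\}\), and hence the exact lower-order term of \(\tfrac{20t^3-20t}{9(t+1)^2}\); the leading coefficient \(\tfrac{20}{9}\) is comparatively robust, but the precise second-order behavior, together with the validity of the worst-case reduction for non-canonical messages whose burst touches the block boundary, is where the care must go.
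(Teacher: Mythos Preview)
The paper does not prove this statement: it is quoted verbatim as a result of Xia, Liu, and Chen~\cite{xia2024function} and appears only in the concluding remarks, with no accompanying argument. There is therefore no proof in the present paper to compare your proposal against.

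That said, your sketch is a reasonable reconstruction of what the argument in the original source presumably looks like. The reduction to canonical messages \(\uu_i=(0^{k-i}1^i)\), the symbol-pair Plotkin inequality with factor \(\tfrac38\) (coming from the four possible pair-values), the choice of a window of \(t+1\) consecutive weights, and the invocation of \cref{GV hadamard} with \(D=2t-1\) all match the shape of the Hamming-metric proofs in the paper, transported to the pair metric. Your arithmetic for the lower bound checks out at the level of the leading coefficient \(\tfrac{20}{9}\). The one place to be cautious is the exact bookkeeping of the two straddling pairs and whether the redundancy contribution should be bounded against \(r\), \(r-1\), or \(r+1\) pair-positions; different conventions here shift the additive constant and can move the bound by \(O(1)\), which is why your claimed distance-requirement entry \(2t-1-|i-j|\) versus the alternative \(2t-|i-j|\) deserves a careful derivation rather than an assertion. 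But these are second-order issues, and the paper itself offers no baseline against which to adjudicate them.
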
  

We can improve the lower bound from \((\frac{20}{9}-o(1))t\) to \((\frac{8}{3}-o(1))t\). The proof follows similar arguments to those used in our lower bound analysis for FCCs, thus we omit the full details here.  

\begin{theorem}\label{thm:lowerboundpairweight}
    For any integers \(k > t\ge 6\),  
    \begin{equation*}
        r_{p}^{\wt_p}(k,t) \geq \frac{8t}{3} - \frac{8}{9}\sqrt{6t-4} - \frac{4}{3}.
    \end{equation*}
\end{theorem}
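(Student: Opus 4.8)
\textbf{Proof proposal for~\cref{thm:lowerboundpairweight}.}

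The plan is to mirror the double-counting argument used in the proof of~\cref{thm:NewDistribution}, but adapted to the symbol-pair metric. First, I would invoke the symbol-pair analogue of~\cref{lem:weight functionalcoding}: for the symbol-pair weight function it again suffices to specify the distance requirements between the $k+1$ monotone vectors $\uu_i = (0^{k-i}1^i)$, so that $r_p^{\wt_p}(k,t)$ equals the minimal length of a sequence of redundancy vectors $\pp_0,\dots,\pp_k$ satisfying $d_p((\uu_i,\pp_i),(\uu_j,\pp_j)) \ge 2t+1$ for all $i \ne j$. The crucial new ingredient is to compute $d_p(\uu_i,\uu_j)$ in the symbol-pair metric. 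Since $\uu_i$ and $\uu_j$ differ only in a block of $|i-j|$ consecutive coordinates, the number of pairs $(x_\ell,x_{\ell+1})$ that change is roughly $|i-j|+1$ rather than $|i-j|$ (the boundary of the differing block affects one extra consecutive pair), and crucially the symbol-pair distance on the full concatenated codeword splits as $d_p((\uu_i,\pp_i),(\uu_j,\pp_j)) \ge d_p(\uu_i,\uu_j) + d_p(\pp_i,\pp_j)$ up to a bounded boundary correction at the junction between the message part and the redundancy part.

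Next I would select a contiguous block $B = [0, L-1]$ of indices of size $L$ (playing the role of $mT$ in the Hamming case, here with $T=1$ effectively but with the symbol-pair scaling factor of $2$ built into the weight), and sum the inequality $d_p(\pp_i,\pp_j) \ge 2t+1 - d_p(\uu_i,\uu_j)$ over all ordered pairs in $B$. On the lower side this gives $\sum_{i\ne j\in B} d_p(\pp_i,\pp_j) \ge 2(2t+1)\binom{L}{2} - \sum_{i\ne j\in B} d_p(\uu_i,\uu_j)$, where the subtracted term is a clean polynomial in $L$ computed from $d_p(\uu_i,\uu_j) \approx |i-j|+1$. On the upper side, I would use the standard Plotkin-style bound: writing $a_h$ for the number of ones in coordinate $h$ across $\{\pp_i\}_{i\in B}$, the symbol-pair distances are bounded by twice the ordinary Hamming contributions, giving $\sum_{i\ne j\in B} d_p(\pp_i,\pp_j) \le 2\sum_h a_h(L-a_h) \le \frac{r L^2}{2}\cdot c$ for the appropriate symbol-pair constant $c$. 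Combining the two bounds yields an inequality of the form $r \ge \alpha t - \beta\sqrt{t} - \gamma$, and optimizing the free parameter $L$ (which controls the trade-off between the linear-in-$t$ main term and the $\sqrt{t}$ correction exactly as $m$ did in~\cref{thm:NewDistribution}) produces the stated bound $\frac{8t}{3} - \frac{8}{9}\sqrt{6t-4} - \frac{4}{3}$.

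The main obstacle I anticipate is the careful bookkeeping of the symbol-pair distance at the two junctions: the boundary between the changed and unchanged coordinates within $\uu_i$ versus $\uu_j$, and the boundary between the message block and the redundancy block $\pp_i$ in the concatenated codeword $(\uu_i,\pp_i)$. Because the symbol-pair metric couples adjacent coordinates cyclically, the decomposition $d_p((\uu_i,\pp_i),(\uu_j,\pp_j)) = d_p(\uu_i,\uu_j) + d_p(\pp_i,\pp_j)$ is not exact; there are $O(1)$ correction terms from the pairs straddling each junction, and one must verify these corrections are either nonnegative (so the lower bound direction is preserved) or absorbed into the constant $\gamma$ without degrading the leading coefficient $\frac{8}{3}$ or the $\sqrt{t}$ term. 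The factor-$2$ scaling inherent to the symbol-pair weight (each flipped coordinate generically affects two consecutive pairs) is precisely what upgrades the Hamming coefficient $4$ to $\frac{8}{3}$ after the optimization, so I would track this scaling explicitly through both the Plotkin upper bound and the computation of $S = \sum_{i\ne j\in B} d_p(\uu_i,\uu_j)$. Since the authors state the proof follows the same arguments and omit details, I expect these junction corrections to contribute only lower-order terms, making the parallel with~\cref{thm:NewDistribution} clean once the symbol-pair distance formula for the monotone vectors is pinned down.
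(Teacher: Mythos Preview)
Your overall strategy—reduce to the monotone vectors, pick a contiguous block $B$ of size $L$, double count, and optimize $L$—is exactly the argument the paper has in mind. However, the Plotkin step as you describe it does not yield the leading coefficient $\tfrac{8}{3}$, and your explanation of where that constant comes from is off.

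You write $\sum_{i\ne j} d_p(\pp_i,\pp_j) \le 2\sum_h a_h(L-a_h)$, justified by ``symbol-pair distances are bounded by twice the ordinary Hamming contributions.'' But $\sum_{i\ne j} d(\pp_i,\pp_j)$ already \emph{equals} $2\sum_h a_h(L-a_h)$, so the inequality $d_p\le 2d$ actually gives $\sum d_p \le 4\sum_h a_h(L-a_h)\le rL^2$. Plugging this into the optimization yields a leading term of only $2t$, which is weaker than the prior bound $\tfrac{20t}{9}$ you are trying to improve. (If on the other hand one took your displayed bound $\tfrac{rL^2}{2}$ at face value, the argument would output a leading term of $4t$, which is too strong and demonstrably false in light of the known upper bound.)

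The missing ingredient is the \emph{quaternary} Plotkin bound applied directly to pair-symbols: at each of the $r$ pair positions the symbol $(\pp_i[h],\pp_i[h{+}1])$ lives in a four-letter alphabet, so that position's contribution to $\sum_{i\ne j}\mathbbm{1}[\text{pair differs}]$ is $L^2-\sum_{a,b}n_{ab}^2\le \tfrac{3}{4}L^2$, whence $\sum_{i\ne j} d_p(\pp_i,\pp_j)\le \tfrac{3}{4}rL^2$. Combining this with $\sum_{i\ne j}d_p(\uu_i,\uu_j)=\sum_{i\ne j}(|i-j|+1)$ and optimizing at $L\approx\sqrt{6t}$ gives precisely $r\ge\tfrac{8t}{3}-O(\sqrt t)$. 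In other words, $\tfrac{8}{3}=2\cdot\tfrac{4}{3}$ arises from the quaternary Plotkin constant $\tfrac{3}{4}$, not from the factor-$2$ relation between $d_p$ and $d$ that you invoke. Once you substitute this sharper bound, the remainder of your outline (including the junction corrections, which indeed contribute only $O(1)$) goes through as in~\cref{thm:NewDistribution}.
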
 

Additionally, developing more efficient FCCs for various natural and practical functions remains an important direction for future work, alongside establishing tight theoretical bounds on redundancy. An alternative approach to bounding redundancy is through the graph-theoretic framework. 
An \emph{independent set} in a graph \( G \) is a set of vertices such that no two vertices in the set are adjacent. The \emph{independence number} \( \alpha(G) \) is the size of the largest independent set in \( G \). It was also noted in~\cite{lenz2023function} that estimating \( r_f(k,t) \) is equivalent to determining the independence number of a graph defined in terms of the function \( f \).
\begin{defn}\label{def:GraphTheoretical}
	Let \( G_f(k,t,r) \) be the graph whose vertex set is \( V = \{0,1\}^k \times \{0,1\}^r \), where each vertex is of the form \( \xx = (\uu, \pp) \in \{0,1\}^{k+r} \). Two vertices \( \xx_1 = (\uu_1, \pp_1) \) and \( \xx_2 = (\uu_2, \pp_2) \) are adjacent if either \( \uu_1 = \uu_2 \), or both \( f(\uu_1) \neq f(\uu_2) \) and \( d(\xx_1, \xx_2) \leq 2t \) hold. Define \( \gamma_f(k,t) \) as the smallest integer \( r \) such that there exists an independent set of size \( 2^k \) in \( G_f(k,t,r) \).
\end{defn}
It was shown in~\cite{lenz2023function} that \( \gamma_f(k,t) = r_f(k,t).\) However, for the Hamming weight function and Hamming weight distribution, the corresponding graphs are too dense to yield effective independence number estimates, even with advanced tools from extremal graph theory~\cite{1980Ramsey,campos2023newlowerboundsphere}. Exploring the applicability of this approach to other functions is a promising avenue for future research.

\bibliographystyle{abbrv}
\bibliography{ImprovedGV}
\end{document}